\newtheorem{theorem}{Theorem}[section]
\newtheorem{lemma}{Lemma}[section]
\renewcommand{\norm}[1]{\left\|#1\right\|}
\newcommand{\fnorm}[1]{\left\|#1\right\|_{\mathrm{F}}}
\newcommand{\bracket}[1]{\left(#1\right)}
\renewcommand{\abs}[1]{\left|#1\right|}
\renewcommand{\tr}[1]{\mathrm{tr}\left(#1\right)}
\newcommand{\bbC}{\mathbb{C}}
\newcommand{\bbE}{\mathbb{E}}
\newcommand{\bbP}{\mathbb{P}}
\newcommand{\calE}{\mathcal{E}}
\newcommand{\calU}{\mathcal{U}}
\newcommand{\Ksq}[1]{K_{\mathrm{sq}}\left(#1 \right)}
\newcommand{\Kdq}[1]{K_{\mathrm{dq}}\left(#1 \right)}
\title[Analysis of Error Propagation in Quantum Computers]{Analysis of Error Propagation in Quantum Computers}
\author{Ziang Yu}
\address{(ZY) School of Mathematical Sciences, Shanghai Jiao Tong
University, Shanghai, China. }
\email{yuziang@sjtu.edu.cn}
\author{Yingzhou Li}
\address{(YL) School of Mathematical Sciences, Fudan University,
Shanghai, China. }
\email{yingzhouli@fudan.edu.cn}
\begin{document}

\bibliographystyle{plain}

\begin{abstract}

Most quantum gate errors can be characterized by two error models, namely
the probabilistic error model and the Kraus error model. We proved that
for a quantum circuit with either of those two models or a mix of both,
the propagation error in terms of Frobenius norm is upper bounded by $2(1
- (1 - r)^m)$, where $0 \le r < 1$ is a constant independent of the qubit
number and circuit depth, and $m$ is the number of gates in the circuit.
Numerical experiments of synthetic quantum circuits and quantum Fourier
transform circuits are performed on the simulator of the IBM Vigo quantum
computer to verify our analytical results, which show that our upper bound
is tight.

\end{abstract}

\maketitle

\section{Introduction}

Quantum computing has been developing rapidly in recent years. It has been
shown that for some specific tasks, quantum algorithms are faster than
their classical counterparts, including Deutsch-Jozsa
algorithm~\cite{Deutsch1992}, Simon algorithm~\cite{Simon1997}, Schor
algorithm~\cite{Schor1994}, Grover algorithm~\cite{Grover1996}, etc.
However, limited by the current quantum hardware technology, quantum
computers suffer from loads of noise and errors, e.g., depolarization,
decoherence, readout error, etc. For quantum circuits with large depths,
the results are not reliable. The current status of quantum computing is
known as the noisy intermediate-scale quantum (NISQ)
era~\cite{Gallot2017}, which could last for many more years.

There are various errors in executing a quantum algorithm on a
quantum computer. We group them into three categories: quantum algorithm
approximation error, quantum sampling error, and quantum machine error.
Quantum algorithm approximation error is due to the approximation in
representing the original models or problems in the algorithm design. One
typical example is the Trotter error in the quantum phase estimation
algorithm. Quantum sampling error is due to the population mean in
approximating the underlying wavefunction coefficients. Quantum machine
error is due to imperfect hardware, where the major source is caused by
the interaction of the quantum computer with its surrounding environment.
Throughout this paper, we refer to the quantum machine error as the
quantum error and discuss its propagation behavior.

Numerous methods have been proposed to mitigate quantum error. We
group these methods into two categories: quantum error correction and
quantum algorithm design. Quantum error correction adopts quantum syndrome
measurement to provide information about whether and in what ways a qubit
has been corrupted without destroying the quantum state of this logical
qubit. Different quantum error correction codes have been brought forward,
including Shor code~\cite{Shor1995}, Calderbank-Shor-Steane (CSS)
code~\cite{Calderbank1996, Steane1996}, additive
codes~\cite{Calderbank1997, Calderbank1998, Gottesman1996}, etc. From a
quantum algorithm design perspective, the noisy terms could be summed
together, and by the central limit theorem, the summed error would be
mitigated. For example, variational quantum eigensolver is found to be
relatively robust to quantum noises~\cite{Sharma2020, Zeng2021}. A similar
phenomenon is observed in its closely related excited state
eigensolver~\cite{Bierman2022}. For all aforementioned methods mitigating
quantum errors, none of them eliminates the errors. The propagation error
of a noisy quantum circuit guides experiments on how large a quantum
circuit is permitted given a fixed error level. For quantum error
correction, the propagation error could indicate which type of error has a
stronger impact on the final results. Hence, it is essential to study the
cumulation and propagation of the quantum error and give a theoretical
upper bound.

The propagation of quantum error has been studied under various scenarios.
In \cite{Muller-Hermes2016}, the convergence of continuous-time
depolarizing channels was investigated in terms of relative entropy.
Deshpande et al.~\cite{Deshpande2021} gave tight bounds on the convergence
of noisy random circuits. From one perspective, noisy random circuits can
be viewed as the propagation of a sequence of noisy identity gates.
Flannigan et al.~\cite{Flannigan2022} numerically explored the propagation
of quantum errors in simulating the Hubbard model and transverse field
Ising model. Very differently, in this work, we study the propagation of
quantum error with the Kraus and probabilistic error model.

Any quantum algorithm is first compiled into a sequence of quantum
circuits and then executed on a quantum computer or quantum simulator. The
execution of a quantum circuit is equivalent to applying a sequence of
quantum gates on an initial density matrix. Quantum gates are unitary
matrices, and applying a gate on a density matrix admits $U \rho
U^\dagger$, where $\rho$ denotes the density matrix and $U$ is the unitary
matrix associated with the quantum gate. The application of a noisy
quantum gate admits $U\rho U^\dagger + \calE$, where $\calE$ denotes the
error. For various types of quantum errors, $\calE$ admits different
properties. When a trace-preserving error is considered, $\calE$ has trace
zero, i.e., $\tr{\calE} = 0$. Alternatively, we could also represent
quantum error as $\calE \bracket{U \rho U^\dagger}$, where $\calE(\cdot)$
is abused as a linear error operator. For more details on the quantum
error model, please refer to Section~\ref{sec:preliminary}.

In traditional numerical analysis, e.g., numerical ordinary differential
equation error analysis, the global error in terms of matrix norm grows
exponentially with the number of matrices, where the matrix spectrum is
assumed to be greater than one. In quantum computing, all matrices are
unitary with spectrums precisely being one, and we could easily give an
error bound growing linearly instead. While linear growing bound is not
consistent with numerical experiments.

\begin{figure}[htb]
    \centering
    \includegraphics[width=0.6\textwidth]{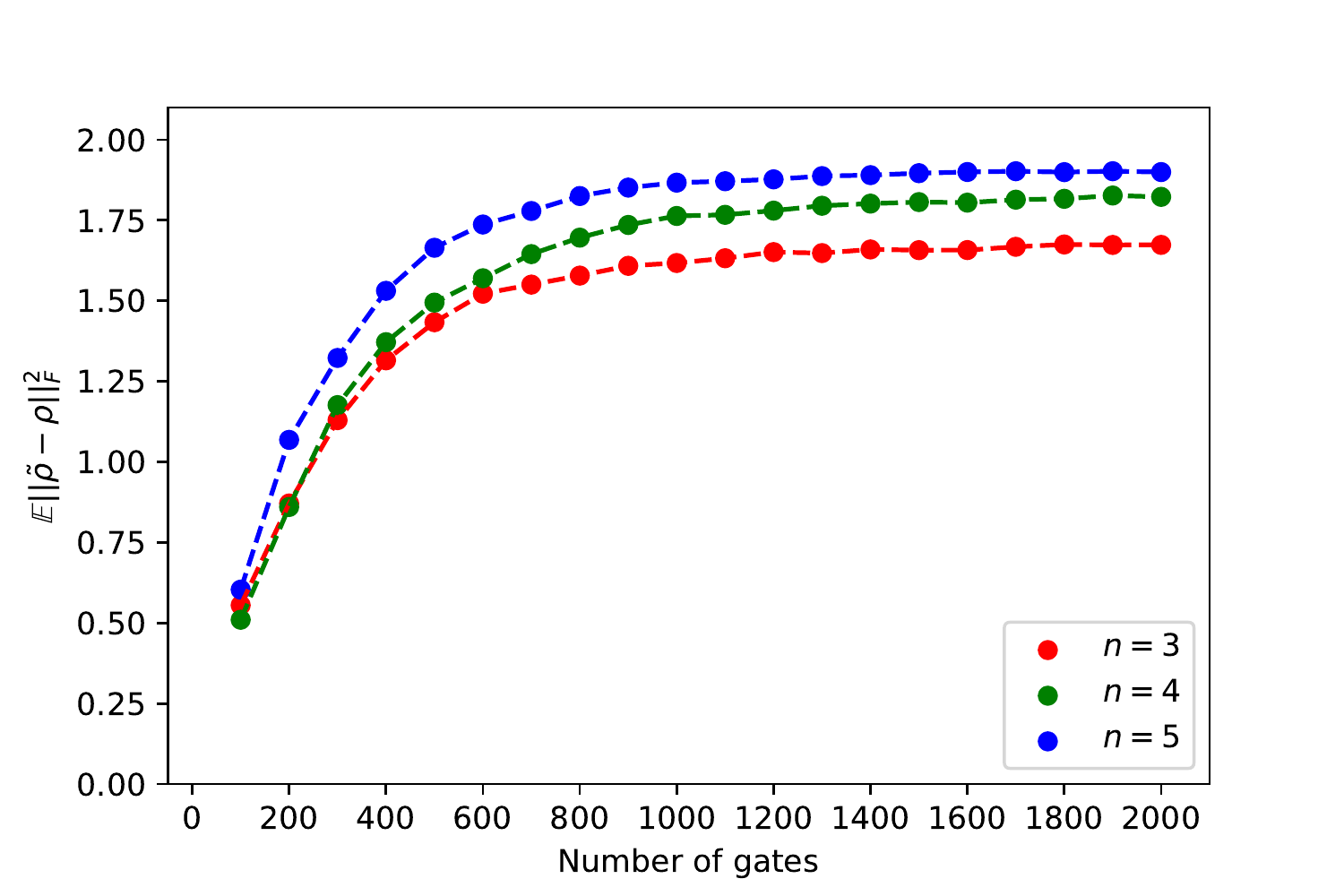}
    \caption{Propagated error for Quantum Fourier Transform (QFT). $3$-,
    $4$-, and $5$-qubit QFT circuits are simulated repeatedly using the
    FakeVigo backend provided by IBM Quantum Experience. The propagated
    error of the circuit is described by
    $\mathbb{E}\fnorm{\tilde{\rho}-\rho}^2$, where $\rho$ represents the
    ideal resulting density matrix, and $\tilde{\rho}$ represents the
    actual resulting density matrix of the noisy circuit. Here the
    expectation is estimated by the empirical mean of 8192 executions of
    the truncated circuits.}\label{fig:QFT-dem}
\end{figure}

Figure~\ref{fig:QFT-dem} illustrates the quantum error of a Quantum
Fourier Transform circuit on a quantum simulator. Instead of growing
linearly, the error saturated and hit a plateau towards the end. Based on
this observation, we aim to give a bound on quantum error propagation,
revealing such a growing behavior. 

In this paper, we analyzed the two error models: the probabilistic error
model and Kraus error model, and proved their bounds of error propagation
in terms of matrix Frobenius norm. Both bounds characterize the error
growth behavior as in Figure~\ref{fig:QFT-dem}. Then, we combined two
results together and proved the following main theorem for the mixed error
model. 
\begin{theorem}
    \label{thm:main_thm}
    Given a quantum circuit with a sequence of single-qubit or
    double-qubit gates $G_1, \cdots, G_m$ starting with an initial density
    matrix $\rho_0$. Each gate $G_k$ is implemented with a probabilistic
    error $P_k$ with error probability $p_k \in [0, 1)$, or a Kraus error
    in the form of \eqref{eq:Kraus_sq} or \eqref{eq:Kraus_dq}. Then the
    expected error propagation is bounded as,
    \begin{equation*}
        \bbE \fnorm{\tilde{\rho}_m - \rho_m}^2 \le 2(1 - (1 - r)^m),
    \end{equation*}
    where $r$ is a constant, $0 \le r < 1$, independent of the number of
    qubits.
\end{theorem}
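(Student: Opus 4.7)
The shape $2(1-(1-r)^m)$ suggests the recursion
\[a_k \le (1-r)\,a_{k-1} + 2r, \qquad a_k := \bbE\fnorm{\tilde\rho_k - \rho_k}^2,\]
which, unrolled from $a_0 = 0$, is exactly $a_m \le 2(1-(1-r)^m)$. So the plan is: (i) exhibit, step by step, a ``clean'' event of probability at least $1-r$ common to both error models; (ii) convert this into the recursion above by a tower-of-expectations argument; and (iii) close by induction on $k$.

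For (i), the probabilistic model is literally a mixture: with probability $1-p_k$ only the ideal $U_k$ is applied. For the Kraus model in the form of \eqref{eq:Kraus_sq}/\eqref{eq:Kraus_dq}, the distinguished Kraus operator is proportional to the identity, $K_0 = \sqrt{1-r_k}\,I$, so the standard stochastic unravelling (sample outcome $i$ with probability $\tr{K_i \sigma K_i^\dagger}$) selects outcome $0$ with probability $\tr{K_0 \sigma K_0^\dagger} = 1-r_k$ independently of the current state $\sigma$, and leaves the state unchanged. Setting $r := \max_k r_k$ (with $r_k = p_k$ on probabilistic steps), the clean-step conditional probability is at least $1-r$ for every gate, and the per-model analyses earlier in the paper already ensure $r < 1$ independently of the qubit count.

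For (ii), the Frobenius norm is unitarily invariant, so on the clean event $\tilde\rho_k$ and $\rho_k$ are conjugated by the same $U_k$, whence $\fnorm{\tilde\rho_k - \rho_k} = \fnorm{\tilde\rho_{k-1} - \rho_{k-1}}$. On the complementary event we only use $\fnorm{\tilde\rho_k - \rho_k}^2 \le 2$, which follows from both iterates being density matrices (via $\fnorm{\rho}^2 \le 1$ and $\tr{\tilde\rho_k \rho_k} \ge 0$). Averaging over the step-$k$ randomness conditional on the prior history and then taking total expectation yields the claimed recursion, and induction finishes the proof.

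The main obstacle I anticipate is verifying in (i) that the per-step clean probability is at least $1-r$ conditional on the \emph{entire} history, not just marginally. In the probabilistic case this is immediate by construction. In the Kraus case it rests on $\tr{K_0 \sigma K_0^\dagger} = 1-r_k$ holding for every density matrix $\sigma$ along every trajectory, which follows directly from $K_0$ being proportional to $I$. Once this uniform conditional lower bound is in place the tower argument is routine, so the novelty of the proof lies in leveraging the common identity-Kraus structure of the two models to extract a single contraction parameter.
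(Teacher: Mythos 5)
Your overall skeleton --- the recursion $a_k \le (1-r)a_{k-1} + 2r$, the bound $\fnorm{\tilde\rho_k-\rho_k}^2 \le 2$ on the ``bad'' event, unitary invariance on the ``clean'' event, and the unrolling to $2(1-(1-r)^m)$ --- matches the structure of the paper's argument, and your treatment of the probabilistic steps is essentially the paper's proof of Theorem~\ref{thm:thm_Prob}. The gap is in step (i) for the Kraus branch. You assert that the Kraus model of \eqref{eq:Kraus_sq} has a distinguished Kraus operator proportional to the identity, so that a stochastic unravelling selects a state-preserving outcome with state-independent probability $1-r_k$. This is false for the channels the theorem covers: in \eqref{eq:Kraus_sq_matrix_form} the dominant operator is $V_1 = \mathrm{diag}(a_1,b_1)$ with $a_1 \ne b_1$ in general (amplitude damping has $V_1=\mathrm{diag}(1,\sqrt{1-\gamma})$), so the ``clean'' outcome neither occurs with state-independent probability nor leaves the state unchanged. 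The failure is not repairable by choosing a different Kraus representation: one can check via the Choi matrix that amplitude damping with $\gamma\in(0,1)$ admits \emph{no} decomposition $(1-r)\,\mathrm{id} + r\,\Phi$ with $\Phi$ a channel and $r<1$, so there is no ``identity component'' to extract, and your uniform conditional clean-step probability does not exist.

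The paper closes exactly this hole by a different, non-probabilistic mechanism: Lemmas~\ref{lem:Kraus-lemma-sq} and \ref{lem:Kraus-lemma-dq} show $\fnorm{K(\rho)}^2 \le 1-\delta$ for some $\delta>0$ (a strict-triangle-inequality plus compactness argument using the nondegeneracy assumption on the parameters), which via Lemma~\ref{lem:Kraus} gives the \emph{deterministic} per-step contraction $\fnorm{K(\tilde\rho)-\rho}^2 \le (1-q)\fnorm{\tilde\rho-\rho}^2 + 2q$ with $q = 1-\delta/2$; the mixed case is then handled by conditioning only on the probabilistic errors and invoking Theorem~\ref{thm:thm_Kraus} on the no-error branch. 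To fix your proof you would need to replace your unravelling of the Kraus steps with such a contraction estimate (or prove the $1-\delta$ bound yourself); as written, the Kraus half of the argument does not go through.
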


In addition to the theoretical bound, we verify our analysis bound using
numerical experiments on quantum simulators, including a simulator of IBM
Vigo quantum computer. Numerical results for the Kraus and
probabilistic models are performed on quantum simulators to demonstrate the
tightness of our bounds. Also, numerical examples of QFT with various
numbers of qubits are included to verify our analytical results.

The rest paper is organized as follows. In Section~\ref{sec:preliminary},
two types of error models are included and explained in detail. A formal
statement of the linearly growing quantum error bound is given. The
refined analysis of the error bounds for Kraus and probabilistic models
are given in Section~\ref{sec:kraus-bound} and
Section~\ref{sec:prob-mix-bound}, respectively. In
Section~\ref{sec:prob-mix-bound}, we also give a bound on the mixed error
model, i.e., prove Theorem~\ref{thm:main_thm}. Section~\ref{sec:numerical}
shows the numerical experiments verifying our bounds and demonstrates the
tightness. Finally, we conclude our paper in Section~\ref{sec:conclusion}
with a discussion on future work.

\section{Quantum Error Models and Linear Error Propagation}
\label{sec:preliminary}

The basic unit of quantum computing is quantum bits, or qubits, which is
the quantum counterpart of bits in classical computers. A qubit is a
two-state quantum-mechanical system. Different from a classical bit, a
qubit can be in a coherent superposition of both states simultaneously.
Mathematically, the possible states of an $n$-qubit system form an
$N$-dimensional Hilbert space, where $N = 2^n$, and therefore can be
described by an $N$-dimensional complex vector. Such a state is called a
pure state. 

When quantum errors are included, the quantum system is no longer isolated
from the environment and interacts with the surrounding environment. Then
the system cannot be described as a pure state. Instead, it can be
described as a probabilistic mixture of a set of pure states. Therefore,
density matrices should be used to describe such a mixed state. For an
$n$-qubit quantum system, the state of the system can be described as an
$N \times N$ density matrix $\rho$. A density matrix is a semi-positive
definite Hermitian matrix with trace being $1$. A useful property is that
$\fnorm{\rho} \le 1$, and $\fnorm{\rho} = 1$ if and only if $\rho$
represents a pure state. 

A quantum algorithm in quantum computing is modeled and compiled into a
quantum circuit, where the quantum circuit is composed of a sequence of
quantum gates. Basic single-qubit quantum gates include Pauli gates
($X,Y,Z$), Hadamard gate ($H$), phase gate ($S$), etc. Double-qubit gates
include controlled not gate ($CNOT$), controlled $Z$ ($CZ$), etc. A
single-qubit and a double-qubit gate can be described as a two-dimensional
and a four-dimensional unitary matrix, respectively. For an $n$-qubit
quantum circuit, a single-qubit operator $U$ acting on the $j$-th qubit
can be described as an $N$-dimensional unitary matrix admitting a tensor
product form $I \otimes \cdots \otimes I \otimes U \otimes I \otimes
\cdots \otimes I$, where $I$ is a two-dimensional identity matrix and $U$
appears at the $j$-th position. Double-qubit operators on $n$-qubit circuit
admit a similar tensor product form with two positions replaced by the
4-dimensional submatrix. Therefore, each gate acting on an $n$-qubit
quantum circuit can be described as a unitary matrix $U\in\mathcal{U}(N)$,
where $\calU(N)$ denotes the set of all unitary matrices of size $N$ by
$N$. Therefore, applying a quantum gate $G$ on a quantum state $\rho$
leads to a new state $U \rho U^\dagger$, where $U$ is the underlying
unitary matrix of $G$. 

\subsection{Quantum error models}

There are two widely adopted mathematical models describing quantum
errors, namely the probabilistic error model and the Kraus error model.
Under the probabilistic error model, after a gate $G$ is applied on some
qubits, there is a nonzero probability $p > 0$ that another error operator
is applied on the same qubits. The error operator could be $X, Y, Z$,
reset, or other operators. We define the probabilistic error operator as
\begin{equation}\label{eq:Probability}
    P(\rho) =
    \begin{cases}
    \rho, & \text{with probability } 1-p \\
    \text{another state}, & \text{with probability } p
    \end{cases}.
\end{equation}
Throughout this paper, the probabilistic error model is used with a set of
error operators, i.e., bit flip ($X$ error), phase flip ($Z$ error),
bit-phase flip ($Y$ error), reset error, and depolarizing error.

Under the Kraus error model, after a gate $G$ is applied, a Kraus operator
will be applied afterward, where the Kraus operator admits,
\begin{equation*}
    K(\rho) = V_1 \rho V_1^\dagger + \cdots + V_K \rho V_K^\dagger,
\end{equation*}
for
\begin{equation}\label{eq:normalize}
    \sum_{k=1}^K V_k^\dagger V_k = I.
\end{equation}
Here $\{V_k\}_{k=1}^K$ are gate $G$ dependent. Two Kraus error examples
are amplitude damping and phase damping. Both errors work as
$\mathcal{E}(\rho)=V_1\rho V_1^\dagger + V_2\rho V_2^\dagger$, with
\begin{equation*}
    V_1 =
    \begin{pmatrix}
        1 & 0 \\
        0 & \sqrt{1-\gamma}
    \end{pmatrix},
    V_2 =
    \begin{pmatrix}
        0 & \sqrt{\gamma} \\
        0 & 0
    \end{pmatrix}, 
\end{equation*}
for amplitude damping, and
\begin{equation*}
    V_1 =
    \begin{pmatrix}
        1 & 0 \\
        0 & \sqrt{1-\lambda}
    \end{pmatrix}, 
    V_2 =
    \begin{pmatrix}
        0 & 0 \\
        0 & \sqrt{\lambda}
    \end{pmatrix}, 
\end{equation*}
for phase damping, where $\gamma$ and $\lambda$ are parameters in
amplitude damping and phase damping, respectively. 

The probabilistic error model and Kraus error model appear in quite
different forms, but they are deeply related. In fact, the probabilistic
error model can be equivalently written in the Kraus format. A few typical
examples are included in Appendix~\ref{app:kraus-prob}.

As we mentioned earlier, quantum gates are applied to one or two qubits.
Both the probabilistic error model and Kraus error model we discuss in
this work are associated with quantum gates and are applied to the same
qubits after the gate operation. The probabilistic error models on one or
two qubits are the same as \eqref{eq:Probability}. For Kraus error models,
we focus on specific forms for single-qubit and double-qubit systems,
which are widely adopted in quantum simulators and cover a wide range
of quantum errors. The Kraus error model for single-qubit systems admits,
\begin{equation} \label{eq:Kraus_sq}
    \Ksq{\rho} = V_1\rho V_1^{\dagger} + V_2\rho V_2^{\dagger}
    + V_3\rho V_3^{\dagger} + V_4\rho V_4^{\dagger},
\end{equation}
where
\begin{equation} \label{eq:Kraus_sq_matrix_form}
    V_1 =
    \begin{pmatrix}
        a_1 & 0 \\
        0 & b_1
    \end{pmatrix},
    \quad
    V_2 =
    \begin{pmatrix}
        a_2 & 0 \\
        0 & b_2
    \end{pmatrix},
    \quad
    V_3 = 
    \begin{pmatrix}
    0 & 0 \\
    a_3 & 0
    \end{pmatrix},
    \quad
    V_4 =
    \begin{pmatrix}
    0 & b_3 \\
    0 & 0
    \end{pmatrix},
\end{equation}
satisfying $V_1^\dagger V_1+V_2^\dagger V_2+V_3^\dagger V_3+V_4^\dagger
V_4=I$. The equality constraint on $V_i$s is equivalent to
\begin{equation*}
    a_1^2 + a_2^2 + a_3^2 = 1 \quad \text{ and } \quad
    b_1^2 + b_2^2 + b_3^2 = 1.
\end{equation*}
Without loss of generality, we assume that $\det{\begin{pmatrix} a_1 & a_2
\\ b_1 & b_2 \end{pmatrix}} \neq 0$. 

For a double-qubit system, if both qubits have Kraus errors in the form of
\eqref{eq:Kraus_sq} being
\begin{equation*}
    \begin{split}
        K_1(\rho) & = V_{11}\rho V_{11}^\dagger + V_{12}\rho V_{12}^\dagger
        + V_{13}\rho V_{13}^\dagger + V_{14}\rho V_{14}^\dagger, \\
        K_2(\rho) & = V_{21}\rho V_{21}^\dagger + V_{22}\rho V_{22}^\dagger
        + V_{23}\rho V_{23}^\dagger + V_{24}\rho V_{24}^\dagger, 
    \end{split}
\end{equation*}
then the effect on the double-qubit system can be written as a Kraus
model
\begin{equation} \label{eq:Kraus_dq}
    \Kdq{\rho} = (K_1\otimes K_2)(\rho)
    = \sum_{j=1}^{16} V_j \rho V_j^{\dagger},
\end{equation}
where $\rho\in\mathbb{C}^{4\times 4}$ represents the density matrix of the
double-qubit systems, and matrices are tensor products $\{V_j\}_{j=1}^{16}
= \{V_{1i} \otimes V_{2j}\}_{i,j=1}^4$. The patterns of
$\{V_j\}_{j=1}^{16}$ can be found in
Appendix~\ref{App:Kraus_double_qubit}.

The Kraus error models as in \eqref{eq:Kraus_sq} and \eqref{eq:Kraus_dq}
are closed under composition operation, i.e., the composition of two Kraus
error models in the form of \eqref{eq:Kraus_sq} can be represented as a
new Kraus error model, also in the form of \eqref{eq:Kraus_sq}.
Lemma~\ref{lem:Kraus-composition-sq} and
Lemma~\ref{lem:Kraus-composition-dq} shows the composition properties of
single-qubit and double-qubit Kraus error model, respectively.

\begin{lemma} \label{lem:Kraus-composition-sq}
    Suppose two single-qubit Kraus errors $K_1$ and $K_2$ are in the form of
    \eqref{eq:Kraus_sq} with matrices and coefficients being denoted as
    $\{V_{ji}\}_{i=1}^4$ and $\{ a_{ji}, b_{ji} \}_{i=1}^3$ respectively
    for $j=1,2$. We further assume that 
    \begin{equation} \label{eq:assume-for-composition}
        a_{13}^2 + a_{23}^2 \le 1, \quad b_{13}^2 + b_{23}^2 \le 1.
    \end{equation}
    Then the composition of $K_1$ and $K_2$,
    $K = K_2 \circ K_1$, is another Kraus error in the form of
    \eqref{eq:Kraus_sq}. 
\end{lemma}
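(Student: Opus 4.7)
The plan is to expand the composition $K = K_2\circ K_1$ into
\begin{equation*}
    K(\rho) = \sum_{i,j=1}^{4} (V_{2j}V_{1i})\,\rho\,(V_{2j}V_{1i})^\dagger
\end{equation*}
and examine the $16$ products $V_{2j}V_{1i}$ directly. Using the explicit block patterns in \eqref{eq:Kraus_sq_matrix_form}, each product falls into one of four groups: (i) the four diagonal-times-diagonal products with $i,j\in\{1,2\}$ are diagonal; (ii) $V_{23}V_{14}$ and $V_{24}V_{13}$ pair a lower with an upper (or vice versa) and are themselves diagonal, each with a single nonzero entry in the $(2,2)$ and $(1,1)$ slot respectively; (iii) the eight products pairing a diagonal with a single-entry off-diagonal remain strictly lower or strictly upper with one nonzero entry; (iv) the products $V_{23}V_{13}$ and $V_{24}V_{14}$ vanish. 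This leaves six effective diagonal Kraus terms, four strictly lower ones, and four strictly upper ones, and the task becomes compressing each of these three collections into the shape required by \eqref{eq:Kraus_sq_matrix_form}.

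The lower and upper collapses are easy. A strictly lower operator $\begin{pmatrix}0 & 0 \\ c & 0\end{pmatrix}$ sends $\rho$ to $c^2\begin{pmatrix}0 & 0 \\ 0 & \rho_{11}\end{pmatrix}$, so summing four such terms depends only on the sum of their squared lower entries. I would therefore fold the four lower pieces into a single $\tilde V_3$ with $\tilde a_3^2 = a_{13}^2(b_{21}^2 + b_{22}^2) + a_{23}^2(a_{11}^2 + a_{12}^2)$, and analogously combine the four upper pieces into $\tilde V_4$ with $\tilde b_3^2 = b_{13}^2(a_{21}^2 + a_{22}^2) + b_{23}^2(b_{11}^2 + b_{12}^2)$. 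Using $\sum_i a_{ji}^2 = 1$ and $\sum_i b_{ji}^2 = 1$, the first expression simplifies to $\tilde a_3^2 = a_{13}^2 + a_{23}^2 - a_{13}^2(a_{23}^2 + b_{23}^2) \le a_{13}^2 + a_{23}^2 \le 1$ by hypothesis \eqref{eq:assume-for-composition}, and the analogous bound holds for $\tilde b_3^2$; this is the step that consumes the hypothesis.

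The main obstacle is compressing the six diagonal operators to two. Writing each as $\diag{x_\alpha, y_\alpha}$, its action on $\rho$ contributes only $x_\alpha^2$ to $\rho_{11}$'s coefficient, $y_\alpha^2$ to $\rho_{22}$'s, and $x_\alpha y_\alpha$ to both off-diagonal coefficients. Hence the combined diagonal action is determined entirely by the $2\times 2$ real positive semidefinite Gram matrix $G$ with entries $G_{11}=\sum_\alpha x_\alpha^2$, $G_{22}=\sum_\alpha y_\alpha^2$, and $G_{12}=\sum_\alpha x_\alpha y_\alpha$. Factoring $G = L^\dagger L$ with $L\in\mathbb{R}^{2\times 2}$ (possible via Cholesky when $G$ is strictly positive definite, and by a spectral step otherwise), the two rows of $L$ furnish real coefficients $(\tilde a_1, \tilde b_1)$ and $(\tilde a_2, \tilde b_2)$ for diagonal $\tilde V_1, \tilde V_2$ whose joint action reproduces that of all six diagonal terms. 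Finally I would verify $\sum_k \tilde V_k^\dagger \tilde V_k = I$, which reduces to $\tilde a_1^2 + \tilde a_2^2 + \tilde a_3^2 = 1$ together with the analogous identity for $b$; expanding $G_{11}$ directly and adding the $\tilde a_3^2$ expression telescopes to $1$ using the original normalizations, confirming that the composition indeed fits the form \eqref{eq:Kraus_sq}.
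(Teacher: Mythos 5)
Your proof is correct, and it reaches the paper's conclusion by a recognizably different route. The paper never touches the sixteen operator products: it computes the action of a Kraus error of the form \eqref{eq:Kraus_sq} on a general trace-one $2\times 2$ matrix, observes that this action is determined by three scalars $A_j=1-a_{j3}^2-b_{j3}^2$, $B_j=b_{j3}^2$, $C_j=a_{j1}b_{j1}+a_{j2}b_{j2}$, composes at the level of these scalars, and then invokes a separate parametrization lemma (Lemma~\ref{lem:system-of-eq-Kraus}, solved by a trigonometric substitution) to recover Kraus coefficients, verifying by hand that $0\le s,t\le 1$ and $st\ge r^2$. You instead classify the products $V_{2j}V_{1i}$ by shape, collapse the strictly triangular ones by summing squared entries, and compress the six diagonal ones through a $2\times 2$ Gram factorization $G=L^{\top}L$. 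The two routes end at literally the same three scalar equations --- your $G_{11}$, $G_{22}$, $G_{12}$ equal the paper's $A_1A_2+B_1A_2+B_2$, $1-B_1A_2-B_2$, $C_1C_2$ --- but your existence step is cleaner, since nonnegativity of the diagonal entries and the Cauchy--Schwarz condition $G_{11}G_{22}\ge G_{12}^2$ are automatic for a Gram matrix, while your operator-level bookkeeping makes trace preservation of the composite ($G_{11}+\tilde a_3^2=1$, etc.) a transparent telescoping identity. Two minor remarks: your step $\tilde a_3^2\le a_{13}^2+a_{23}^2\le 1$ uses \eqref{eq:assume-for-composition} exactly as the paper does, though the sharper estimate $\tilde a_3^2=a_{13}^2+a_{23}^2-a_{13}^2(a_{23}^2+b_{23}^2)\le 1-(1-a_{13}^2)(1-a_{23}^2)\le 1$ shows the hypothesis is not actually needed there; and, like the paper, you do not check the nondegeneracy condition $\tilde a_1\tilde b_2-\tilde a_2\tilde b_1\ne 0$ attached to \eqref{eq:Kraus_sq_matrix_form}, which is acceptable since the paper treats it as a without-loss-of-generality normalization rather than part of the form.
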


\begin{lemma}
    \label{lem:Kraus-composition-dq}
    Suppose two double-qubit Kraus errors $K_1 = K_{11} \otimes K_{21}$
    and $K_2 = K_{12}\otimes K_{22}$ are in the form of
    \eqref{eq:Kraus_dq}. We further assume that the parameters for both
    pairs $K_{11}, K_{12}$, and $K_{21}, K_{22}$ satisfy
    \eqref{eq:assume-for-composition}. Then the composition of $K_1$ and
    $K_2$, $K = K_2 \circ K_1$, is another Kraus error in the form of
    \eqref{eq:Kraus_dq}. 
\end{lemma}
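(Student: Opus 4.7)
The plan is to reduce this lemma to the single-qubit case (Lemma~\ref{lem:Kraus-composition-sq}) by exploiting the tensor-product structure built into \eqref{eq:Kraus_dq}. Since the statement already writes $K_1 = K_{11}\otimes K_{21}$ and $K_2 = K_{12}\otimes K_{22}$, the whole argument should boil down to interchanging composition with tensor product on the channel level and then invoking the single-qubit result on each factor.

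First, I would establish the algebraic identity
\begin{equation*}
    (K_{12}\otimes K_{22})\circ (K_{11}\otimes K_{21})
    = (K_{12}\circ K_{11}) \otimes (K_{22}\circ K_{21}).
\end{equation*}
This is a standard fact for CP maps: writing each factor in its Kraus form and expanding,
\begin{equation*}
    \bigl((K_{12}\otimes K_{22})\circ (K_{11}\otimes K_{21})\bigr)(\rho)
    = \sum_{i,j,k,\ell}(V_{12,i}\otimes V_{22,j})(V_{11,k}\otimes V_{21,\ell})\,\rho\,(V_{11,k}\otimes V_{21,\ell})^{\dagger}(V_{12,i}\otimes V_{22,j})^{\dagger},
\end{equation*}
and then using $(A\otimes B)(C\otimes D) = (AC)\otimes(BD)$ to regroup the Kraus operators into $(V_{12,i}V_{11,k})\otimes (V_{22,j}V_{21,\ell})$. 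Taking the sum over $(i,k)$ inside the first slot and $(j,\ell)$ inside the second slot exactly reconstructs $K_{12}\circ K_{11}$ on the first qubit tensored with $K_{22}\circ K_{21}$ on the second qubit.

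Next, I would apply Lemma~\ref{lem:Kraus-composition-sq} separately to the pairs $(K_{11},K_{12})$ and $(K_{21},K_{22})$. Both pairs are assumed to satisfy the hypothesis \eqref{eq:assume-for-composition}, so the lemma gives that $K_{12}\circ K_{11}$ and $K_{22}\circ K_{21}$ are themselves single-qubit Kraus errors in the form of \eqref{eq:Kraus_sq}. Their tensor product therefore fits the template $(K_1'\otimes K_2')$ that defines \eqref{eq:Kraus_dq}, which is the desired conclusion.

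I do not expect a serious obstacle here: the only nontrivial ingredient is the single-qubit composition lemma, which is already established. The bookkeeping step that could use a little care is the interchange identity above, since one must be explicit that the Kraus representations decompose so that composition truly factors through each qubit; once that is written out, the rest is immediate from Lemma~\ref{lem:Kraus-composition-sq}.
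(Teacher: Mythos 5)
Your proof is correct and follows essentially the same route as the paper, which also reduces the claim to the identity $(K_{12}\otimes K_{22})\circ(K_{11}\otimes K_{21}) = (K_{12}\circ K_{11})\otimes(K_{22}\circ K_{21})$ and then (implicitly) applies Lemma~\ref{lem:Kraus-composition-sq} to each factor. Your write-up is in fact slightly more careful than the paper's one-line argument, both in spelling out the Kraus-operator regrouping and in getting the order of composition right.
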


Proofs of Lemma~\ref{lem:Kraus-composition-sq} and
Lemma~\ref{lem:Kraus-composition-dq} can be found in
Appendix~\ref{app:error-model-composition-lemmas}. In Kraus error models
as \eqref{eq:Kraus_sq} and \eqref{eq:Kraus_dq}, $V_j$ for $j \geq 2$ terms
are viewed as errors, and hence, the parameters $a_i$ and $b_i$ for $i
\geq 2$ are close to $0$. Thus the assumptions of $a_i$ and $b_i$ are
found reasonable in practice.

Other than the gate error models discussed above, the thermal relaxation
error model, which describes how errors may occur as time goes by, can
also be expressed in either the probabilistic error model or the Kraus
error model. Suppose a thermal relaxation channel is parametrized by
relaxation time constant $T_1, T_2$, gate time $t$, and excited state
thermal population $p_1$. If $T_1 < T_2$, then the thermal relaxation
channel can be expressed as a Kraus channel $\calE(\rho) = \sum_{j=1}^4
V_j \rho V_j^\dagger$, with
\begin{equation*}
    \begin{split}
        V_1 =
        \begin{pmatrix}
            1 - p_1 p_{\text{reset}} & 0 \\
            0 & e^{-t/T_2}
        \end{pmatrix}, & \quad
        V_2 =
        \begin{pmatrix}
            e^{-t/T_2} & 0 \\
            0 & 1 - p_0 p_{\text{reset}}
        \end{pmatrix}, \\
        V_3 =
        \begin{pmatrix}
            0 & 0 \\
            p_0 p_{\text{reset}} & 0
        \end{pmatrix}, & \quad
        V_4 =
        \begin{pmatrix}
            0 & p_1 p_{\text{reset}} \\
            0 & 0
        \end{pmatrix},
    \end{split}
\end{equation*}
where $p_0 = 1 - p_1$ and $p_{\text{reset}} = 1 - e^{-t/T_1}$. If $T_1 >
T_2$, then the channel is equivalent to a probabilistic model, with
$p_{R0} = p_0 p_{\text{reset}}, p_{R1} = p_1 p_{\text{reset}}, p_Z =
\frac{(1 - p_{\text{reset}})}{2} \bracket{1 - e^{\bracket{-t
(\frac{1}{T_2} - \frac{1}{T_1})}}}$, and $p_I = 1 - p_{R0} - p_{R1} -
p_Z$, where $R0$ and $R1$ represent the reset transformation to
$\ketbra{0}{0}$ and $\ketbra{1}{1}$, respectively. 

In Lemma~\ref{lem:prob-to-Kraus}, we show that some quantum errors, which
are introduced as probabilistic errors, could be represented as Kraus
errors as well.

\begin{lemma}
    \label{lem:prob-to-Kraus}
    Assume the probabilities of $X$ error and $Y$ error are equal, i.e.,
    $p_X = p_Y$, then any combination of $X$ error, $Y$ error, $Z$ error,
    reset to $\ketbra{0}{0}$, reset to $\ketbra{1}{1}$, and depolarizing
    error can be written as a Kraus error in the form of
    \eqref{eq:Kraus_sq}. 
\end{lemma}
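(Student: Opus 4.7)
The plan is to expand the action of the combined probabilistic channel on a generic single-qubit density matrix $\rho = \begin{pmatrix}\rho_{00} & \rho_{01}\\ \rho_{10} & \rho_{11}\end{pmatrix}$ entry by entry, and then match the result against the four-operator Kraus template \eqref{eq:Kraus_sq_matrix_form}. First I would absorb the depolarizing error into the Pauli part of the mixture, since single-qubit depolarizing is itself a convex combination of $I, X, Y, Z$ conjugations. This leaves a probabilistic model with effective weights $c_I, c_X, c_Y, c_Z, c_{R0}, c_{R1}$ summing to one, and the hypothesis $p_X = p_Y$ upgrades to $c_X = c_Y$ because depolarizing feeds the $X$ and $Y$ slots equally.

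The central observation is the cancellation identity $X\rho X + Y\rho Y = \mathrm{diag}(2\rho_{11}, 2\rho_{00})$: the off-diagonal entries of $X\rho X$ and $Y\rho Y$ are negatives of each other and drop out. This is precisely where $p_X = p_Y$ is indispensable, because otherwise the output inherits an off-diagonal term of the form $(c_X - c_Y)\rho_{10}$ in the $(0,1)$ position, which the template \eqref{eq:Kraus_sq_matrix_form} cannot produce (it can only scale $\rho_{01}$ at that position). Combining this cancellation with $Z\rho Z$ (which only flips the sign of the off-diagonal part) and the two reset channels (which output only diagonal matrices), a direct computation gives
\begin{equation*}
\mathcal{E}(\rho)=\begin{pmatrix}(c_I+c_Z+c_{R0})\rho_{00}+(2c_X+c_{R0})\rho_{11} & (c_I-c_Z)\rho_{01}\\ (c_I-c_Z)\rho_{10} & (2c_X+c_{R1})\rho_{00}+(c_I+c_Z+c_{R1})\rho_{11}\end{pmatrix}.
\end{equation*}

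Next I would compute the Kraus-template output and match coefficients. The diagonal operators $V_1, V_2$ contribute $(a_1^2 + a_2^2)\rho_{00}$ at position $(0,0)$, $(b_1^2 + b_2^2)\rho_{11}$ at $(1,1)$, and $(a_1 b_1 + a_2 b_2)\rho_{01}$ at $(0,1)$, while $V_3$ contributes $a_3^2\rho_{00}$ at $(1,1)$ and $V_4$ contributes $b_3^2\rho_{11}$ at $(0,0)$. Matching produces the system $a_3^2 = 2c_X + c_{R1}$, $b_3^2 = 2c_X + c_{R0}$, $a_1^2+a_2^2 = c_I+c_Z+c_{R0}$, $b_1^2+b_2^2 = c_I+c_Z+c_{R1}$, and $a_1b_1+a_2b_2 = c_I - c_Z$. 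The Kraus normalization $a_1^2+a_2^2+a_3^2 = b_1^2+b_2^2+b_3^2 = 1$ then drops out of the identity $c_I + 2c_X + c_Z + c_{R0} + c_{R1} = 1$.

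What remains, and is the main obstacle, is exhibiting concrete $(a_1, a_2)$ and $(b_1, b_2)$ realizing both the prescribed squared-lengths and the prescribed cross term. By Cauchy--Schwarz this reduces to verifying $|c_I - c_Z|\le\sqrt{(c_I+c_Z+c_{R0})(c_I+c_Z+c_{R1})}$, which is immediate because $c_{R0}, c_{R1}\ge 0$ forces the right-hand side to be at least $c_I + c_Z \ge |c_I - c_Z|$. I would then parametrize $(a_1, a_2)$ and $(b_1, b_2)$ as planar vectors of the required lengths and choose the relative angle so that their inner product equals $c_I - c_Z$; a small rotation leaving the inner product invariant can always be used to enforce the nondegeneracy condition $\det\begin{pmatrix}a_1 & a_2\\ b_1 & b_2\end{pmatrix}\ne 0$ stated after \eqref{eq:Kraus_sq_matrix_form}.
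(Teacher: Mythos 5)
Your argument is essentially the paper's own proof: you expand the mixture entrywise, use $p_X=p_Y$ to cancel the off-diagonal contributions of $X\rho X$ and $Y\rho Y$, match coefficients against the template \eqref{eq:Kraus_sq_matrix_form}, and reduce existence of $(a_1,a_2,b_1,b_2)$ to the Cauchy--Schwarz feasibility condition --- which is exactly the content of the paper's Lemma~\ref{lem:system-of-eq-Kraus} and its angle parametrization. Absorbing the depolarizing channel into the Pauli weights rather than keeping it as a separate $\tfrac{p_D}{2}I$ term is only a cosmetic reorganization; your resulting system of equations is identical to the paper's \eqref{eq:compose-linear}. One detail in your last sentence is wrong, though: a simultaneous rotation of $(a_1,a_2)$ and $(b_1,b_2)$ preserves not only their lengths and inner product but also the determinant $a_1b_2-a_2b_1$ (it is the signed area of the spanned parallelogram), so it cannot be used to enforce $\det\ne 0$. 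That determinant vanishes precisely when equality holds in your Cauchy--Schwarz bound, e.g.\ for a pure $X$/$Y$ mixture with $c_Z=c_{R0}=c_{R1}=0$, and no choice of solution avoids it there. The paper's proof silently skips this nondegeneracy check as well, so this does not put you behind the paper, but the specific repair you propose does not work.
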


Proof of Lemma~\ref{lem:prob-to-Kraus} can be found in
Appendix~\ref{app:error-model-composition-lemmas}. 

All mentioned errors can be described by either the Kraus error model
\eqref{eq:Kraus_sq} or the probabilistic error model. We summarize the
error models and various types of errors in Table~\ref{tab:errortype}. 

\begin{table}[htb]
    \centering
    \begin{tabular}{ccc}
        \toprule
        Error type & Probabilistic error model & Kraus error model
        \eqref{eq:Kraus_sq} \\
        \toprule
        $X$ & \Checkmark & \XSolidBrush \\
        $Y$ & \Checkmark & \XSolidBrush \\
        $Z$ & \Checkmark & \Checkmark  \\
        Reset to $\ketbra{0}{0}$ & \Checkmark & \Checkmark \\
        Reset to $\ketbra{0}{0}$ & \Checkmark & \Checkmark \\
        Depolarizing & \Checkmark & \Checkmark \\
        Amplitude damping & \XSolidBrush & \Checkmark \\
        Phase damping & \XSolidBrush & \Checkmark \\
        Thermal relaxation & \XSolidBrush & \Checkmark \\
        \midrule
        Combination & \XSolidBrush & Conditioned \\
        \bottomrule
    \end{tabular}
    \caption{The errors each error model can describe respectively. The
    combination of errors can be described by the Kraus error model in the
    form of \eqref{eq:Kraus_sq} under the condition that the probabilities
    of $X$ error and $Y$ error are equal, i.e. $p_X=p_Y$. }
    \label{tab:errortype}
\end{table}

\subsection{Linear growing error analysis}

We consider a quantum circuit with a sequence of $m$ quantum gates $G_1,
\cdots, G_m$ and their corresponding unitary matrices being $U_1, \cdots,
U_m$. We further denote $\rho_0$ as the initial density matrix, and
$\rho_k$ as the error-free density matrix after $k$ gates have been
implemented, i.e.,
\begin{equation} \label{eq:error_free}
    \rho_k = U_k \cdots U_1 \rho_0 U_1^\dagger \cdots U_k^\dagger.
\end{equation}
The actual density matrix with quantum error after $k$ gates is denoted as
$\tilde{\rho}_k$, i.e.,
\begin{equation*}
    \begin{split}
        \tilde{\rho}_0 = & \rho_0, \\
        \tilde{\rho}_1 = & E_1\bracket{U_1 \tilde{\rho}_0 U_1^\dagger}, \\
        \cdots & \\
        \tilde{\rho}_m = & E_m\bracket{U_m \tilde{\rho}_{m-1} U_m^\dagger}, 
    \end{split}
\end{equation*}
where $E_i$ is either a probabilistic error operator $P$, a Kraus error
operator $K$, or their composition. Notice that $\tilde{\rho}_k$ has
randomness if there are probabilistic errors.

A linearly growing bound for both Kraus error and probabilistic error can
be proved easily. We first give
Lemma~\ref{lem:linear-bound-for-Kraus-error} and
Lemma~\ref{lem:linear-bound-for-probabilistic-error} for Kraus error model
and probabilistic error model respectively. Then,
Lemma~\ref{lem:linear-bound-for-mixed-error} gives the linear growing
bound for mixed errors.

\begin{lemma}
    \label{lem:linear-bound-for-Kraus-error}
    Given a quantum circuit with a sequence of quantum gates $G_1, \cdots,
    G_m$ starting with an initial density matrix $\rho_0$. Each gate $G_k$
    is implemented with a Kraus error $K_k$. Suppose there is a constant
    $\gamma_1 > 0$ such that 
    \begin{equation} \label{eq:assume-for-Kraus-linear-bound}
        \fnorm{K_k(\rho) - \rho} \le \gamma_1,
    \end{equation}
    for any density matrix $\rho$ and $k = 1, \cdots, m$. Then the error
    of density matrix grows at most linearly,
    \begin{equation} \label{eq:linear_bound_for_Kraus}
        \fnorm{\tilde{\rho}_m - \rho_m} \le \gamma_1 m. 
    \end{equation}
\end{lemma}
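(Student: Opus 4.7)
The plan is to proceed by induction on the circuit depth $m$, using the triangle inequality together with the unitary invariance of the Frobenius norm. The base case $m=0$ is immediate since $\tilde{\rho}_0 = \rho_0$.

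For the inductive step, assuming $\fnorm{\tilde{\rho}_{m-1} - \rho_{m-1}} \le \gamma_1 (m-1)$, I would write
\begin{equation*}
    \tilde{\rho}_m - \rho_m = \bigl[ K_m(U_m \tilde{\rho}_{m-1} U_m^\dagger) - U_m \tilde{\rho}_{m-1} U_m^\dagger \bigr] + U_m\bigl(\tilde{\rho}_{m-1} - \rho_{m-1}\bigr)U_m^\dagger,
\end{equation*}
by adding and subtracting $U_m \tilde{\rho}_{m-1} U_m^\dagger$. The first bracket is precisely a Kraus perturbation of a density matrix, so hypothesis \eqref{eq:assume-for-Kraus-linear-bound} bounds its Frobenius norm by $\gamma_1$. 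The second term has Frobenius norm equal to $\fnorm{\tilde{\rho}_{m-1} - \rho_{m-1}}$ by unitary invariance of the Frobenius norm. Applying the triangle inequality and the induction hypothesis yields the bound $\gamma_1 m$.

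The one subtlety, which is the only step that requires care, is verifying that the hypothesis \eqref{eq:assume-for-Kraus-linear-bound} is actually applicable at stage $m$: it is stated for ``any density matrix $\rho$,'' so I must confirm that $U_m \tilde{\rho}_{m-1} U_m^\dagger$ is a legitimate density matrix. This follows because Kraus maps satisfying the normalization \eqref{eq:normalize} are completely positive and trace-preserving, and conjugation by a unitary also preserves Hermiticity, positivity, and trace. An inductive argument therefore shows that each $\tilde{\rho}_k$ (and hence $U_{k+1} \tilde{\rho}_k U_{k+1}^\dagger$) remains a density matrix throughout the circuit, making the per-step bound $\gamma_1$ legitimately invocable at every stage.
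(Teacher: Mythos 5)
Your proof is correct and follows essentially the same route as the paper's: the identical add-and-subtract of $U_m \tilde{\rho}_{m-1} U_m^\dagger$, the triangle inequality, unitary invariance of the Frobenius norm, and iteration of the one-step bound. Your extra remark verifying that $U_m\tilde{\rho}_{m-1}U_m^\dagger$ remains a density matrix is a point the paper leaves implicit, but it does not change the argument.
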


Notice that Lemma~\ref{lem:linear-bound-for-Kraus-error} shows the error
growing for the general Kraus error model. A probabilistic error $P$ with
probability $p$ can be expressed as a general Kraus error,
\begin{equation}\label{eq:prob-to-Kraus}
    P(\rho) = p\rho + (1-p) R \rho R^\dagger
    = V_1 \rho V_1^\dagger + V_2 \rho V_2^\dagger, 
\end{equation}
where $V_1 = \sqrt{p}I$, $V_2 = \sqrt{1-p}R$, and unitary $R$ maps $\rho$
to another state. Therefore, we have the following results, which could be
viewed as corollaries of Lemma~\ref{lem:linear-bound-for-Kraus-error}.

\begin{lemma}
    \label{lem:linear-bound-for-probabilistic-error}
    Given a quantum circuit with a sequence of quantum gates $G_1, \cdots,
    G_m$ starting with an initial density matrix $\rho_0$. Each gate $G_k$
    is implemented with a probabilistic error $P_k$ with probability $0 <
    p_k \leq 1$. Then there is a constant $\gamma_2 > 0$ such that  
    \begin{equation}\label{eq:linear-bound-for-prob}
        \bbE \fnorm{\tilde{\rho}_m - \rho_m} \le \gamma_2 m. 
    \end{equation}
\end{lemma}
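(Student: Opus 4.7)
The plan is to derive Lemma~\ref{lem:linear-bound-for-probabilistic-error} as a conditional corollary of Lemma~\ref{lem:linear-bound-for-Kraus-error}, by fixing the realization of the probabilistic errors before invoking the Kraus-error bound. First I would parametrize the randomness by $\omega = (\omega_1, \ldots, \omega_m)$, where $\omega_k$ labels the outcome of $P_k$ (either ``no error'', or one of the specific error operators such as $X$, $Y$, $Z$, reset, or a depolarizing branch). Conditioned on $\omega$, each noisy step collapses to a deterministic CPTP map $E_k^{(\omega_k)}$: either the identity or one of the error channels (a unitary conjugation for Pauli errors, a non-unitary Kraus channel for reset). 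In every case $E_k^{(\omega_k)}$ is a Kraus channel in the sense required by Lemma~\ref{lem:linear-bound-for-Kraus-error}, and $\tilde\rho_m(\omega)$ is a deterministic density matrix.

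Second, I would verify the hypothesis \eqref{eq:assume-for-Kraus-linear-bound} uniformly. Because $E_k^{(\omega_k)}(\sigma)$ and $\sigma$ are both density matrices, each of Frobenius norm at most $1$, the triangle inequality gives
\begin{equation*}
    \fnorm{E_k^{(\omega_k)}(\sigma) - \sigma}
    \le \fnorm{E_k^{(\omega_k)}(\sigma)} + \fnorm{\sigma} \le 2
\end{equation*}
for every $k$, every realization $\omega$, and every density matrix $\sigma$. Applying Lemma~\ref{lem:linear-bound-for-Kraus-error} to the conditional sequence with $\gamma_1 = 2$ therefore yields $\fnorm{\tilde\rho_m(\omega) - \rho_m} \le 2m$ for every $\omega$, and taking the expectation over $\omega$ immediately produces \eqref{eq:linear-bound-for-prob} with $\gamma_2 = 2$.

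The point I would flag, and what makes the argument slightly more delicate than it first appears, is that the direct route via the averaged Kraus representation $P_k(\rho) = (1 - p_k)\rho + p_k R_k \rho R_k^\dagger$ suggested by \eqref{eq:prob-to-Kraus} only controls $\fnorm{\bbE \tilde\rho_m - \rho_m}$, which by Jensen's inequality is a lower bound on $\bbE \fnorm{\tilde\rho_m - \rho_m}$ rather than an upper bound. Conditioning on $\omega$ first, and only then invoking the Kraus lemma along each deterministic trajectory, is what reverses the inequality into the direction we actually need. No further obstacles are expected; the constant $\gamma_2 = 2$ is loose and could be sharpened to $2\sum_k p_k$ by an additional telescoping step that exploits the vanishing of the $k$-th increment on the ``no error'' branch, but this refinement is not required for the linear-growth claim of the lemma.
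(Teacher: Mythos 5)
Your proof is correct, but it takes a genuinely different route from the paper's. The paper's proof (Appendix D) works with the averaged Kraus representation $P(\rho) = p\rho + (1-p)R\rho R^\dagger$ from \eqref{eq:prob-to-Kraus}, bounds the one-step deviation by $\fnorm{P(\rho)-\rho} \le 2(1-p)$, and then telescopes exactly as in the proof of Lemma~\ref{lem:linear-bound-for-Kraus-error}, choosing $\gamma_2 = 2 - 2\min\{p_1,\dots,p_m\}$. You instead condition on the realization $\omega$ of every probabilistic branch, note that each realized channel moves a density matrix by at most $2$ in Frobenius norm, apply the Kraus-error lemma pathwise, and only then average, obtaining $\gamma_2 = 2$. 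The issue you flag is a fair criticism of the paper's argument as literally written: the averaged-channel route controls $\fnorm{\bbE\tilde{\rho}_m - \rho_m}$, which by convexity of the norm is a lower bound on $\bbE\fnorm{\tilde{\rho}_m - \rho_m}$, so an extra step is needed to reach the stated expectation of the norm. Your pathwise conditioning closes that gap cleanly; alternatively, the paper's recursion can be repaired by taking the expectation of the one-step triangle inequality over the branch of $P_k$ (the increment is $0$ on the no-error branch and at most $2$ otherwise), which recovers the per-gate factor equal to twice the error probability. That factor is what the paper's route buys and what your blanket constant $2$ discards; your closing remark about sharpening to $2\sum_k p_k$ essentially recovers it. Since the lemma only asserts the existence of some $\gamma_2 > 0$, both arguments establish the claim.
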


\begin{lemma}
    \label{lem:linear-bound-for-mixed-error}
    Given a quantum circuit with a sequence of quantum gates $G_1, \cdots,
    G_m$ starting with an initial density matrix $\rho_0$. Each gate $G_k$
    is implemented with either a probabilistic error with probability $0 <
    p_k \leq 1$, a Kraus error satisfying
    \eqref{eq:assume-for-Kraus-linear-bound}, or a mix of both. Then there
    is a constant $\gamma \ge 0$ independent of the number of qubits, such
    that
    \begin{equation*}
        \bbE \fnorm{\tilde{\rho}_m - \rho_m} \le \gamma m.
    \end{equation*}
\end{lemma}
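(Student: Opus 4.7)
The plan is to reduce the mixed-error case to the two pure cases by a standard telescoping (hybrid) argument. I would first introduce intermediate density matrices $\rho_m^{(k)}$ obtained by applying the noisy channel for the first $k$ gates and the noise-free unitary for the remaining $m-k$ gates, so that $\rho_m^{(0)} = \rho_m$ and $\rho_m^{(m)} = \tilde{\rho}_m$. Writing
\begin{equation*}
    \tilde{\rho}_m - \rho_m = \sum_{k=1}^m \bracket{\rho_m^{(k)} - \rho_m^{(k-1)}}
\end{equation*}
and applying the triangle inequality under expectation reduces the task to bounding each per-step contribution separately.

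For each $k$, the difference $\rho_m^{(k)} - \rho_m^{(k-1)}$ equals $\mathcal{U}_m \circ \cdots \circ \mathcal{U}_{k+1}\bracket{(E_k - \mathcal{I})(U_k \tilde{\rho}_{k-1} U_k^\dagger)}$, where $E_k$ denotes the error superoperator at step $k$ and $\mathcal{U}_j(\cdot) = U_j \cdot U_j^\dagger$. Since the Frobenius norm is invariant under unitary conjugation, $\fnorm{U A U^\dagger} = \fnorm{A}$, the outer unitary conjugations contribute nothing, and each term collapses to $\fnorm{(E_k - \mathcal{I})(\sigma_k)}$, where $\sigma_k := U_k \tilde{\rho}_{k-1} U_k^\dagger$ is (almost surely) a valid density matrix because the density-matrix property is preserved by unitary conjugation and by both error models applied at earlier steps.

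I would then bound $\fnorm{(E_k - \mathcal{I})(\sigma_k)}$ according to the type of error at step $k$. If $E_k$ is a Kraus error, assumption \eqref{eq:assume-for-Kraus-linear-bound} gives the uniform bound $\gamma_1$. If $E_k$ is a probabilistic error that applies some unitary $R_k$ with probability $p_k$, the perturbation vanishes with probability $1-p_k$ and equals $R_k \sigma_k R_k^\dagger - \sigma_k$ with probability $p_k$; using $\fnorm{\sigma} \le 1$ for any density matrix yields the deterministic bound $2$ and hence the conditional expected bound $2 p_k$. A mixed per-gate error of the form $K_k \circ P_k$ (or $P_k \circ K_k$) is handled by an extra triangle-inequality step through the intermediate state $P_k(U_k \tilde{\rho}_{k-1} U_k^\dagger)$, again giving an $n$-independent bound. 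Taking $\gamma$ to be the maximum of these per-step bounds and summing over $k$ produces the claimed linear estimate $\gamma m$.

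The main obstacle is the bookkeeping of randomness when probabilistic and Kraus errors are interleaved, because $\tilde{\rho}_{k-1}$ depends on all probabilistic coin flips from earlier steps. This is resolved by bounding each per-step contribution conditionally on the history of those flips---which is legitimate because the per-step bound above holds for \emph{every} density matrix $\sigma_k$---and then invoking the tower property of conditional expectation. Since each conditional bound is independent of the number of qubits $n$ (Kraus bound from the hypothesis, probabilistic bound from $\fnorm{\sigma} \le 1$), the resulting constant $\gamma$ is also $n$-independent, as required.
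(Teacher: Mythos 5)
Your proposal is correct and is essentially the paper's argument: the paper proves the two constituent lemmas by the one-step recursion $\fnorm{\tilde{\rho}_k - \rho_k} \le \fnorm{E_k(\sigma_k) - \sigma_k} + \fnorm{\tilde{\rho}_{k-1} - \rho_{k-1}}$ (unitary invariance of the Frobenius norm plus a per-gate bound of $\gamma_1$ for Kraus errors and $O(1)$ in expectation for probabilistic errors), and your telescoping over hybrid states is just the unrolled form of that recursion. Your explicit conditioning on the history of coin flips is a welcome extra precaution that the paper omits (it states this lemma without detailed proof), but it does not change the route.
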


The proofs of Lemma~\ref{lem:linear-bound-for-Kraus-error} and
Lemma~\ref{lem:linear-bound-for-probabilistic-error} are given in
Appendix~\ref{app:linear-error-bound}.
Lemma~\ref{lem:linear-bound-for-mixed-error} could be viewed as a simple
composition of Lemma~\ref{lem:linear-bound-for-Kraus-error} and
Lemma~\ref{lem:linear-bound-for-probabilistic-error}, and is stated
without detailed proof.

However, a linear growing upper bound does not agree well with
experimental results in Figure~\ref{fig:QFT-dem}. A simple inequality for
the Frobenius norm of the difference between two density matrices
indicates that the quantum error should be upper bounded by a constant,
i.e.,
\begin{equation} \label{eq:direct_bound}
    \fnorm{\tilde{\rho} - \rho}^2 = \fnorm{\tilde{\rho}}^2
    + \fnorm{\rho}^2 - 2\tr{\tilde{\rho}\rho} \le 2,
\end{equation}
for any density matrices $\tilde{\rho}$ and $\rho$, where the last
inequality is based on the fact that the product of two semi-definite
matrices has non-negative trace~\cite{Ruhe1970}. Therefore, the linear
growing bound cannot characterize the error propagates as the number of
gates increases. A tighter bound for quantum error propagations is
desired.

\section{Analysis of Kraus Error Propagation}
\label{sec:kraus-bound}

In this section, we focus on the analysis of quantum error propagation for
quantum circuits with Kraus error as in the form of \eqref{eq:Kraus_sq}
and \eqref{eq:Kraus_dq} only. The setting could be fairly similar to that
in Lemma~\ref{lem:linear-bound-for-Kraus-error}. But conclusions are
dramatically different. In this section, we show that the error
propagation would scale as $1-(1-q)^m$, where $m$ is the depth and $q$ is
a constant, $0 \leq q < 1$, depending on Kraus error parameters. The major
result of this section is given in Theorem~\ref{thm:thm_Kraus}.
Lemma~\ref{lem:Kraus-lemma-sq}, Lemma~\ref{lem:Kraus-lemma-dq}, and
Lemma~\ref{lem:Kraus} are proposed and proved to facilitate the proof of
Theorem~\ref{thm:thm_Kraus}.

\begin{theorem}
    \label{thm:thm_Kraus}
    Given a quantum circuit with a sequence of single-qubit or
    double-qubit gates $G_1, \cdots, G_m$ starting with an initial density
    matrix $\rho_0$. Each gate $G_k$ is implemented with a Kraus error
    $K_k$ as in form \eqref{eq:Kraus_sq} or \eqref{eq:Kraus_dq}. Then the
    error propagation is bounded as,
    \begin{equation} \label{eq:Kraus-thm}
        \fnorm{\tilde{\rho}_m - \rho_m}^2 \le 2(1 - (1 - q)^m),
    \end{equation}
    where $q$ is a constant, $0 \le q < 1$, independent of the number of
    qubits.
\end{theorem}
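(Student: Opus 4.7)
The strategy is to prove a multiplicative decay of fidelity and then bound the Frobenius error in terms of it. First assume $\rho_0$ is pure; the general case follows by writing $\rho_0 = \sum_i p_i|\psi_i\rangle\langle\psi_i|$ and observing that linearity of the ideal unitary evolution and of Kraus channels yields $\tilde\rho_k - \rho_k = \sum_i p_i(\tilde\rho_k^{(i)} - \rho_k^{(i)})$, so Jensen's inequality gives $\fnorm{\tilde\rho_k - \rho_k}^2 \le \sum_i p_i \fnorm{\tilde\rho_k^{(i)} - \rho_k^{(i)}}^2$ and the pure-initial-state bound is preserved. Under the purity assumption every $\rho_k$ is pure, so $\fnorm{\tilde\rho_k - \rho_k}^2 = \fnorm{\tilde\rho_k}^2 + 1 - 2F_k \le 2(1 - F_k)$ where $F_k := \tr{\tilde\rho_k \rho_k}$ and $F_0 = 1$. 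It therefore suffices to produce $q \in [0,1)$ depending only on the Kraus parameters such that $F_k \ge (1-q)F_{k-1}$ at each step; iterating gives $F_m \ge (1-q)^m$ and hence \eqref{eq:Kraus-thm}.

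To get the per-gate decay, set $\pi_k = U_k\rho_{k-1}U_k^\dagger$ (a pure state) and let $K_k^*(\cdot) = \sum_j V_j^\dagger \cdot V_j$ denote the adjoint channel. Using $\tr{K(A)B} = \tr{AK^*(B)}$ together with cyclicity of the trace, $F_k = \tr{\tilde\rho_{k-1}\,U_k^\dagger K_k^*(\pi_k) U_k}$ while $F_{k-1} = \tr{\tilde\rho_{k-1}\,\rho_{k-1}}$. Since $\tilde\rho_{k-1}$ is an arbitrary PSD matrix of unit trace, $F_k \ge (1-q)F_{k-1}$ is implied by the operator inequality
\begin{equation*}
K_k^*(\pi) \succeq (1-q)\,\pi \quad \text{for every pure $N$-qubit state } \pi,
\end{equation*}
with $K_k^*$ lifted to the full system by tensoring with the identity on the unaffected qubits. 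This is what I would package as Lemma~\ref{lem:Kraus}, and Lemmas~\ref{lem:Kraus-lemma-sq} and~\ref{lem:Kraus-lemma-dq} would supply the single- and double-qubit core estimates.

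To prove the lifted operator inequality with $q$ independent of $N$, write $|\psi\rangle = \sum_{ij}C_{ij}|i\rangle|j\rangle$ with $i$ indexing the affected qubit(s) and $j$ the rest, and a test vector $|w\rangle = \sum_{ij}W_{ij}|i\rangle|j\rangle$. A short computation gives $\langle w|(V_\ell \otimes I)|\psi\rangle = \tr{V_\ell M}$ with $M := CW^\dagger \in \mathbb{C}^{2\times 2}$ (or $\mathbb{C}^{4\times 4}$), so the operator inequality reduces to the scalar estimate
\begin{equation*}
\sum_\ell \abs{\tr{V_\ell M}}^2 \ge (1-q)\,\abs{\tr{M}}^2 \quad \text{for all } M,
\end{equation*}
which in vectorised form is $\Sigma \succeq (1-q)\,\text{vec}(I)\text{vec}(I)^\dagger$ with $\Sigma = \sum_\ell \text{vec}(V_\ell)\text{vec}(V_\ell)^\dagger$ --- an inequality entirely in the $4\times 4$ (resp.\ $16\times 16$) Liouville space, with no $N$-dependence. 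The alternating diagonal/off-diagonal structure of $V_1,V_2$ versus $V_3,V_4$ in \eqref{eq:Kraus_sq_matrix_form} gives $\Sigma$ a block decomposition in which the population sector contributes a trivially PSD diagonal piece and the coherence sector produces a single $2\times 2$ Schur-complement condition driven by $a_1b_1+a_2b_2$, $a_3^2$, and $b_3^2$; the constraints $a_1^2+a_2^2+a_3^2 = b_1^2+b_2^2+b_3^2 = 1$ then deliver an explicit $q_{\mathrm{sq}} \in [0,1)$. The double-qubit case uses the tensor identity $K_{\mathrm{dq}}^* = K_1^*\otimes K_2^*$ to reduce to two copies of the single-qubit estimate.

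The technical heart --- and principal obstacle --- is the Schur-complement analysis of the $4\times 4$ matrix $\Sigma - (1-q)\text{vec}(I)\text{vec}(I)^\dagger$: one must extract an explicit $q_{\mathrm{sq}} < 1$ from the normalisation constraints while carefully handling the corner cases where $a_3^2$, $b_3^2$, or $|a_1b_1 + a_2 b_2|$ approach their extreme values. The non-degeneracy assumption $\det\begin{pmatrix}a_1 & a_2 \\ b_1 & b_2\end{pmatrix}\neq 0$ built into the Kraus parameterisation is exactly what rules out the pathological case $q_{\mathrm{sq}} = 1$, keeping the constant strictly below $1$ and independent of the qubit count.
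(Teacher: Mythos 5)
Your argument is correct in outline but takes a genuinely different route from the paper's. The paper works directly with the squared error $e_k=\fnorm{\tilde{\rho}_k-\rho_k}^2$ and establishes the affine recursion $e_k\le(1-q)e_{k-1}+2q$ via the function $F$ in \eqref{eq:Fdef}; its constant $q$ comes from a purity bound $\fnorm{K(\rho)}^2\le 1-\delta$ (Lemmas~\ref{lem:Kraus-lemma-sq} and~\ref{lem:Kraus-lemma-dq}), whose dimension-independence is obtained by compressing an $n$-qubit pure state to a single-qubit one through block Frobenius norms, and whose strict positivity is obtained non-constructively by a continuity-plus-compactness argument. You instead run a multiplicative recursion on the overlap $F_k=\tr{\tilde{\rho}_k\rho_k}$, reduce to pure initial states by convexity, and derive the contraction from the operator inequality $K^*(\pi)\succeq(1-q)\pi$, which your $\tr{V_\ell M}$ vectorisation correctly collapses to a fixed-dimensional statement with no $N$-dependence; what this buys is an explicit, computable $q$, whereas the paper's $\delta$ is only shown to exist. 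Two remarks on your final step, which you flag as the principal obstacle: it is in fact easier than you suggest. Since $V_1,V_2$ are diagonal and $V_3,V_4$ are strictly off-diagonal, $\Sigma$ is block-diagonal with respect to the population/coherence splitting of $M$, and $\mathrm{vec}(I)$ lives entirely in the population block; the coherence terms contribute $a_3^2\abs{m_{12}}^2+b_3^2\abs{m_{21}}^2\ge 0$ and only help, so no Schur complement and no corner-case analysis of $a_3,b_3$ are needed. The whole inequality reduces to $A^\dagger A\succeq(1-q)\mathbf{1}\mathbf{1}^\top$ for $A=\begin{pmatrix}a_1&b_1\\a_2&b_2\end{pmatrix}$ and $\mathbf{1}=(1,1)^\top$, i.e.\ one may take $1-q=\bracket{\mathbf{1}^\top(A^\dagger A)^{-1}\mathbf{1}}^{-1}$, which is positive precisely because of the determinant assumption and satisfies $q\ge 0$ because $\tr{A^\dagger A}\le 2$. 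For the double-qubit case your tensor reduction is valid via the monotonicity $\Sigma_1\otimes\Sigma_2\succeq(1-q_1)(1-q_2)\,vv^\dagger\otimes ww^\dagger$, which holds since tensoring with a positive semi-definite factor preserves the Loewner order.
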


Before giving precise proof of Theorem~\ref{thm:thm_Kraus}, we first
sketch the key ideas therein. The basic idea is to find a constant $q \in
[0,1)$, for $q$ as small as possible, such that
\begin{equation} \label{eq:estimation}
    \fnorm{\tilde{\rho}_k - \rho_k}^2 \le (1-q)\fnorm{\tilde{\rho}_{k-1}
    - \rho_{k-1}}^2 + 2q,
\end{equation}
where $\tilde{\rho}_k$ and $\rho_k$ are noisy and noiseless density matrix
after acting $k$ gates, respectively. Then, we could recursively apply
\eqref{eq:estimation}, and obtain,
\begin{equation} \label{eq:goal}
    \begin{split}
        \fnorm{\tilde{\rho}_m - \rho_m}^2
        & \le (1-q)\fnorm{\tilde{\rho}_{m-1} - \rho_{m-1}}^2 + 2q \\
        & \le (1-q)\bracket{(1-q) \fnorm{\tilde{\rho}_{m-2}
        - \rho_{m-2}}^2 + 2q} + 2q \\
        & \le \cdots \\
        & \le 2q\bracket{1 + (1-q) + \cdots + (1-q)^{m-1}} \\
        & = 2(1-(1-q)^m),
    \end{split}
\end{equation}
which is the conclusion of Theorem~\ref{thm:thm_Kraus}. Substituting the
gate and quantum error action in the matrix form, \eqref{eq:estimation} is
equivalent to
\begin{equation} \label{eq:estimation_1}
    \begin{split}
        \fnorm{K_k\bracket{U_k \tilde{\rho}_{k-1} U_k^\dagger}
        - U_k \rho_{k-1} U_k^\dagger}^2
        & \le (1-q)\fnorm{\tilde{\rho}_{k-1} - \rho_{k-1}}^2 + 2q \\
        & = (1-q)\fnorm{U_k \tilde{\rho}_{k-1} U_k^\dagger
        - U_k \rho_{k-1} U_k^\dagger}^2 + 2q.
    \end{split}
\end{equation}
Denoting $\tilde{\rho} = U_k \tilde{\rho}_{k-1} U_k^\dagger$ and $\rho =
U_k \rho_{k-1} U_k^\dagger$, \eqref{eq:estimation_1} can, then, be written
as
\begin{equation} \label{eq:estimation_2}
    \fnorm{K_k\bracket{\tilde{\rho}} - \rho}^2
    \le (1-q)\fnorm{\tilde{\rho} - \rho}^2 + 2q. 
\end{equation}

We define a function $F$ of Kraus error $K$, density matrices
$\tilde{\rho}$ and $\rho$ as
\begin{equation} \label{eq:Fdef}
    F(K; \tilde{\rho}, \rho) =
    \begin{cases}
        \frac{\fnorm{K\bracket{\tilde{\rho}} - \rho}^2
        - \fnorm{\tilde{\rho} - \rho}^2}
        {2 - \fnorm{\tilde{\rho} - \rho}^2},
        & \fnorm{\tilde{\rho} - \rho}^2 < 2 \\
        \lim_{\tilde{\rho}^\prime \rightarrow \tilde{\rho},
        \rho^\prime \rightarrow \rho}
        F(K; \tilde{\rho}^\prime, \rho^\prime),
        & \fnorm{\tilde{\rho} - \rho}^2 = 2
    \end{cases}.
\end{equation}
Then, finding a constant $q$, as small as possible, satisfying
\eqref{eq:estimation_2} could be addressed by finding an upper bound of
$F(K; \tilde{\rho}, \rho)$, i.e.,
\begin{equation} \label{eq:supF}
    q = \sup_{\tilde{\rho}, \rho} F(K; \tilde{\rho}, \rho).
\end{equation}
Since $\fnorm{K\bracket{\tilde{\rho}} - \rho}^2 \le 2$ (see
\eqref{eq:ineq2}), it always holds $F(K; \tilde{\rho}, \rho) \le 1$.
However, this is not sufficient for us to estimate a tighter upper bound
for $F$. To achieve our goal, we need $F(K; \tilde{\rho}, \rho)$ to be
strictly less than $1$ for all $\tilde{\rho}$ and $\rho$, which is
guaranteed by the following lemmas. 

\begin{lemma}
    \label{lem:Kraus-lemma-sq}
    Suppose $K_\text{sq}$ is a single-qubit Kraus operator in the form of
    \eqref{eq:Kraus_sq}. Then there exist a constant $\delta > 0$
    independent of $n$, such that
    \begin{equation*}
        \fnorm{K_\text{sq}(\rho)}^2 \le 1 - \delta
    \end{equation*}
    for any $n$-qubit density matrix $\rho \in \mathbb{C}^{N \times N}$.
\end{lemma}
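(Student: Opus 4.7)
The plan is to exploit convexity: since $\rho \mapsto \fnorm{\Ksq{\rho}}^2$ is the composition of a linear map with a convex norm, it is convex in $\rho$, and hence attains its supremum over the convex set of $n$-qubit density matrices at an extreme point, namely a pure state $\rho = \ket{\psi}\bra{\psi}$. After permuting qubits I may assume $\Ksq{}$ acts on qubit $1$. Decomposing $\ket{\psi} = \ket{0} \otimes \ket{\alpha} + \ket{1} \otimes \ket{\beta}$ with $p := \|\alpha\|^2$, $q := \|\beta\|^2$, $p + q = 1$, the density matrix has the block form
\begin{equation*}
    \ket{\psi}\bra{\psi} = \begin{pmatrix} A & B \\ B^\dagger & D \end{pmatrix}, \qquad A = \ket{\alpha}\bra{\alpha},\ D = \ket{\beta}\bra{\beta},\ B = \ket{\alpha}\bra{\beta}.
\end{equation*}
A direct computation using the explicit matrices in \eqref{eq:Kraus_sq_matrix_form} then gives
\begin{equation*}
    \Ksq{\ket{\psi}\bra{\psi}} = \begin{pmatrix} \alpha A + \beta D & \mu B \\ \mu B^\dagger & \gamma D + \eta A \end{pmatrix},
\end{equation*}
where $\alpha = 1 - a_3^2$, $\eta = a_3^2$, $\gamma = 1 - b_3^2$, $\beta = b_3^2$, and $\mu = a_1 b_1 + a_2 b_2$.

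Expanding the Frobenius norm with $\fnorm{A}^2 = p^2$, $\fnorm{D}^2 = q^2$, $\fnorm{B}^2 = pq$, and $\tr{AD} = \abs{\braket{\alpha}{\beta}}^2 \le pq$ by Cauchy--Schwarz, and using the identity $(\alpha + \eta)(\beta + \gamma) = 1$ to rewrite $1 - \alpha\beta - \gamma\eta = \alpha\gamma + \beta\eta$, I would obtain
\begin{equation*}
    1 - \fnorm{\Ksq{\ket{\psi}\bra{\psi}}}^2 \;\ge\; 2\bigl[\alpha\eta\, p^2 + \beta\gamma\, q^2 + (\alpha\gamma + \beta\eta - \mu^2)\, p q\bigr].
\end{equation*}
A second Cauchy--Schwarz bound $\mu^2 \le (a_1^2 + a_2^2)(b_1^2 + b_2^2) = \alpha\gamma$, made strict by the standing assumption $\det \begin{pmatrix} a_1 & a_2 \\ b_1 & b_2 \end{pmatrix} \neq 0$, guarantees that the coefficient $\alpha\gamma + \beta\eta - \mu^2 > 0$.

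The main obstacle is turning this pointwise bound into a uniform constant $\delta > 0$ independent of $n$. Since the right-hand side depends only on $p \in [0,1]$ and on the Kraus parameters $\{a_i, b_i\}$ (and not on the ambient dimension $N = 2^n$), the task reduces to showing that the continuous quadratic
\begin{equation*}
    f(p) = \alpha\eta\, p^2 + \beta\gamma\, (1-p)^2 + (\alpha\gamma + \beta\eta - \mu^2)\, p(1-p)
\end{equation*}
has a strictly positive minimum on the compact interval $[0,1]$. The boundary values $f(0) = \beta\gamma$ and $f(1) = \alpha\eta$ are strictly positive whenever $a_3, b_3 \in (0,1)$---consistent with the paper's stipulation that the error parameters $a_i, b_i$ for $i \ge 2$ are small but non-vanishing---while the interior is controlled by the strictly positive coefficient of the $p(1-p)$ term. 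One then sets $\delta = 2\min_{p \in [0,1]} f(p)$, which depends only on the Kraus parameters, yielding the required $n$-independent constant. The subtle point is that if either $a_3$ or $b_3$ were to vanish, the channel would possess a computational-basis fixed pure state and $\delta$ would have to degenerate to zero, so the argument implicitly relies on the non-triviality of the Kraus error in both directions $\ket{0}$ and $\ket{1}$.
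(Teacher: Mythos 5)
Your proposal is correct and follows the same two reduction steps as the paper's proof---restriction to pure states (your convexity/extreme-point argument is exactly what the paper encodes as \eqref{eq:Kraus-lemma-decomp}), then the block decomposition of an $n$-qubit pure state with respect to the qubit the channel acts on (your $A$, $B$, $D$ are the paper's $R_1R_1^\dagger$, $R_1R_2^\dagger$, $R_2R_2^\dagger$ from Lemma~\ref{lemma:app})---but the decisive step is genuinely different. The paper argues qualitatively: it writes $\Ksq{uu^\dagger}=\sum_k x_kx_k^\dagger$, rules out equality in the triangle inequality by a case analysis on $u$, and then invokes continuity plus compactness to produce a $\delta>0$ whose value is never exhibited (it is later estimated by sampling $10^8$ density-matrix pairs in Section~5). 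You instead expand $1-\fnorm{\Ksq{\ketbra{\psi}{\psi}}}^2$ directly and your algebra checks out: with $s=\abs{\braket{\alpha}{\beta}}^2\le pq$ and $(\alpha+\eta)(\beta+\gamma)=1$ one gets precisely $1-\fnorm{\Ksq{\ketbra{\psi}{\psi}}}^2\ge 2\bracket{\alpha\eta p^2+\beta\gamma q^2+(\alpha\gamma+\beta\eta-\mu^2)pq}$, and $\alpha\gamma-\mu^2>0$ is the strict Cauchy--Schwarz inequality enforced by the determinant assumption on \eqref{eq:Kraus_sq_matrix_form}. This yields the explicit constant $\delta=2\min_{p\in[0,1]}f(p)$, computable from the Kraus parameters alone, which is strictly more informative than the paper's existence argument and addresses what the paper's conclusion defers to future work.

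The caveat you flag at the end deserves emphasis, because it exposes a real gap in the lemma as stated rather than a weakness of your method. If $a_3=0$ or $b_3=0$ your bound degenerates ($f(1)=a_3^2(1-a_3^2)$ or $f(0)=b_3^2(1-b_3^2)$ vanishes), and indeed the lemma is then false: taking $a_1=b_2=1$ and all other parameters zero satisfies \eqref{eq:Kraus_sq} and the determinant condition, yet $\Ksq{\ketbra{0}{0}}=\ketbra{0}{0}$ has Frobenius norm $1$. The paper's own equality-case analysis overlooks this---at $u_1=0$ it claims linear dependence of $x_1,x_2,x_4$ forces $b_1=b_2=b_3=0$, whereas $b_3=0$ alone already suffices. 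So the hypothesis $a_3\neq 0$, $b_3\neq 0$ is silently required by both arguments; your derivation has the merit of making its necessity explicit.
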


\begin{lemma}
    \label{lem:Kraus-lemma-dq}
    Suppose $K_\text{dq}$ is a double-qubit Kraus operator in the form of
    \eqref{eq:Kraus_dq}. Then there is a constant $\delta > 0$ independent
    of $n$, such that
    \begin{equation*}
        \fnorm{K_\text{dq}(\rho)}^2 \le 1 - \delta 
    \end{equation*}
    for any $n$-qubit density matrix $\rho \in \mathbb{C}^{N \times N}$.
\end{lemma}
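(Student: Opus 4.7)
The approach is to reduce the double-qubit bound directly to the single-qubit bound established in Lemma~\ref{lem:Kraus-lemma-sq}. By \eqref{eq:Kraus_dq} we may write $K_{\text{dq}} = K_1 \otimes K_2$ with $K_1, K_2$ single-qubit Kraus operators in the form of \eqref{eq:Kraus_sq}. When regarded as super-operators on the full $n$-qubit space (extended by identity on the remaining $n-2$ qubits), $K_1$ and $K_2$ act on disjoint qubit factors, so they commute and the action factors as $K_{\text{dq}}(\rho) = K_2(K_1(\rho))$. This commutation is a direct consequence of the tensor-product structure $V_j = V_{1i} \otimes V_{2j}$ of the Kraus operators in \eqref{eq:Kraus_dq}.

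Next I would observe that $K_1$ is trace-preserving and completely positive---the normalization $\sum_{k=1}^{4} V_k^\dagger V_k = I$ implies precisely this---so $\sigma := K_1(\rho)$ is itself a valid $n$-qubit density matrix. Then Lemma~\ref{lem:Kraus-lemma-sq} applied to $K_2$ on the input $\sigma$ yields a constant $\delta_2 > 0$, depending only on the parameters of $K_2$ and thus independent of $n$, such that
\begin{equation*}
    \fnorm{K_{\text{dq}}(\rho)}^2 = \fnorm{K_2(\sigma)}^2 \le 1 - \delta_2.
\end{equation*}
Setting $\delta = \delta_2$ (or, if symmetry is desired, $\delta = \min(\delta_1,\delta_2)$ where $\delta_i$ is the constant supplied by Lemma~\ref{lem:Kraus-lemma-sq} for $K_i$) completes the bound.

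In this plan there is no real obstacle: once the factorization $K_{\text{dq}} = K_2 \circ K_1$ on the $n$-qubit space is justified and the CPTP property of $K_1$ is invoked, the double-qubit lemma is essentially a corollary of the single-qubit lemma, requiring no new estimates. The only point that demands a moment of care is verifying that the resulting $\delta$ inherits its $n$-independence from Lemma~\ref{lem:Kraus-lemma-sq}; this is immediate since $\delta_2$ depends only on the fixed parameters $\{a_{2i}, b_{2i}\}$ of $K_2$ and not on the ambient dimension $N = 2^n$.
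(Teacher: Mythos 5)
Your proposal is correct, but it takes a genuinely different route from the paper. The paper does not reduce the double-qubit case to the single-qubit one: it simply asserts that Lemma~\ref{lem:Kraus-lemma-dq} ``can be proved in the same way'' as the single-qubit lemma, i.e.\ by repeating the whole argument for the sixteen Kraus matrices of \eqref{eq:Kraus_dq} --- decompose $\rho$ into pure states via \eqref{eq:Kraus-lemma-decomp}, apply the triangle inequality to the sixteen rank-one terms $x_j x_j^\dagger$ with $x_j = V_j u$, rule out the equality case using the nonvanishing $4\times 4$ determinant condition, and invoke compactness --- and then omits the details. You instead exploit the product structure $K_{\mathrm{dq}} = K_1 \otimes K_2$ to factor the channel as $(\mathrm{id}\otimes K_2)\circ(K_1\otimes\mathrm{id})$, observe that the inner map is CPTP (by the normalization $\sum_i V_{1i}^\dagger V_{1i}=I$) so that $\sigma = K_1(\rho)$ is again an $n$-qubit density matrix, and then apply Lemma~\ref{lem:Kraus-lemma-sq} to the outer single-qubit map on the input $\sigma$; since that lemma is stated for arbitrary $n$-qubit density matrices and its constant depends only on the parameters of $K_2$, the bound $1-\delta_2$ transfers verbatim. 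This is cleaner: it avoids redoing the delicate equality-case analysis for sixteen matrices, and it makes the $n$-independence of $\delta$ manifest. Two minor remarks: the ``commutation'' of $K_1$ and $K_2$ is not actually needed (only the factorization into a composition, which follows from $(V_{1i}\otimes V_{2j}) = (I\otimes V_{2j})(V_{1i}\otimes I)$), and your reduction only uses the bound for $K_2$, so taking $\delta=\delta_2$ already suffices.
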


In both Lemma~\ref{lem:Kraus-lemma-sq} and Lemma~\ref{lem:Kraus-lemma-dq},
we do not have explicit expressions for $\delta$. We only prove the
existence of $\delta$ satisfying $\delta > 0$. Both proofs of
Lemma~\ref{lem:Kraus-lemma-sq} and Lemma~\ref{lem:Kraus-lemma-dq} obey the
following flow. First, by triangle inequality, it can be proved that
$\fnorm{K(\rho)} \le 1$. Then we illustrate that the equality condition
cannot hold for $K_\text{sq}$ and $K_\text{dq}$, i.e.,
$\fnorm{K\bracket{\rho}} < 1$ for both $K_\text{sq}$ and $K_\text{dq}$.
Moreover, since $K$ is a continuous function of density matrix $\rho$,
which is defined on a compact set, the maximum of this function can be
achieved. Therefore, the upper bound of $\fnorm{K\bracket{\rho}}$ is
strictly less than $1$, for both $K_\text{sq}$ and $K_\text{dq}$.
Appendix~\ref{app:Kraus-lemma} proves Lemma~\ref{lem:Kraus-lemma-sq} and
Lemma~\ref{lem:Kraus-lemma-dq} in detail.

With Lemma~\ref{lem:Kraus-lemma-sq} and Lemma~\ref{lem:Kraus-lemma-dq}, we
then have the following lemma. 

\begin{lemma}
    \label{lem:Kraus}
    Suppose $K$ is a single-qubit Kraus operator in the form of
    \eqref{eq:Kraus_sq} or a double-qubit Kraus operator in the form of
    \eqref{eq:Kraus_dq}. There is a constant $0\le q < 1$, such that
    \begin{equation} \label{eq:lemma-Kraus}
        F(K; \tilde{\rho}, \rho) \le q,
    \end{equation}
    for any density matrices $\tilde{\rho}$ and $\rho$. 
\end{lemma}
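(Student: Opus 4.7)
The plan is to reduce the desired bound $F(K;\tilde{\rho},\rho) \le q$ to a uniform upper bound on $\fnorm{K(\tilde{\rho}) - \rho}^2$ that is strictly less than $2$, and then read off $q$ from the algebraic form of $F$. The essential input is Lemma~\ref{lem:Kraus-lemma-sq} (or Lemma~\ref{lem:Kraus-lemma-dq} in the double-qubit case), which guarantees the existence of a constant $\delta > 0$, independent of the number of qubits, such that $\fnorm{K(\sigma)}^2 \le 1 - \delta$ for every density matrix $\sigma$.

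For the interior region $\fnorm{\tilde{\rho} - \rho}^2 < 2$, I would expand
\begin{equation*}
    \fnorm{K(\tilde{\rho}) - \rho}^2 = \fnorm{K(\tilde{\rho})}^2
    + \fnorm{\rho}^2 - 2\tr{K(\tilde{\rho})\rho}.
\end{equation*}
Since $K$ is completely positive, $K(\tilde{\rho})$ is positive semidefinite, and so is $\rho$; hence the cross term satisfies $\tr{K(\tilde{\rho})\rho} \ge 0$. Combining this with $\fnorm{\rho}^2 \le 1$ and the Kraus-lemma bound yields $\fnorm{K(\tilde{\rho}) - \rho}^2 \le (1 - \delta) + 1 = 2 - \delta$. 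Substituting into the definition of $F$ and writing $x = \fnorm{\tilde{\rho} - \rho}^2$,
\begin{equation*}
    F(K;\tilde{\rho},\rho) \le \frac{(2-\delta) - x}{2 - x}
    = 1 - \frac{\delta}{2 - x} \le 1 - \frac{\delta}{2},
\end{equation*}
since $2 - x \le 2$. Hence $q = 1 - \delta/2 \in [0,1)$ works throughout the interior.

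For boundary points with $\fnorm{\tilde{\rho} - \rho}^2 = 2$, where $F$ is defined via the limit in \eqref{eq:Fdef}, the numerator $\fnorm{K(\tilde{\rho}) - \rho}^2 - \fnorm{\tilde{\rho} - \rho}^2$ is continuous in $(\tilde{\rho},\rho)$ and is bounded above by $-\delta < 0$ at the boundary, while the denominator $2 - \fnorm{\tilde{\rho} - \rho}^2$ approaches $0^+$ along any interior-valued sequence. Thus the limit is $-\infty$ in the extended real sense, and is trivially dominated by $q$. The same constant $q = 1 - \delta/2$ therefore works uniformly.

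The main obstacle has effectively been packaged into Lemmas~\ref{lem:Kraus-lemma-sq} and \ref{lem:Kraus-lemma-dq}; once those are available, the remainder is a short algebraic manipulation together with the standard fact that the trace of the product of two positive semidefinite matrices is nonnegative. The only subtle conceptual point is to notice that the worst case for $F$ occurs when $\tilde{\rho}$ and $\rho$ are \emph{closest}, so that the near-boundary behavior of $F$ is harmless (tending to $-\infty$) rather than problematic, and no separate compactness argument is needed to control the supremum.
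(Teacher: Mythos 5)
Your proof is correct and follows essentially the same route as the paper's: invoke Lemma~\ref{lem:Kraus-lemma-sq} or \ref{lem:Kraus-lemma-dq} to get $\fnorm{K(\tilde{\rho})}^2 \le 1-\delta$, expand $\fnorm{K(\tilde{\rho})-\rho}^2$ and use nonnegativity of the trace of a product of positive semidefinite matrices to obtain the bound $2-\delta$, and then read off $q = 1 - \delta/2 = (2-\delta)/2$ from the definition of $F$. The only cosmetic difference is at the boundary $\fnorm{\tilde{\rho}-\rho}^2 = 2$, where you observe the limit is $-\infty$ while the paper simply notes the limit inherits the uniform bound $(2-\delta)/2$; both dispositions are valid.
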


\begin{proof}

According to Lemma~\ref{lem:Kraus-lemma-sq} or
Lemma~\ref{lem:Kraus-lemma-dq}, there is a constant $\delta>0$ such that
$\fnorm{K(\rho)}^2 \le 1 - \delta$ for any density matrix $\rho$.
Therefore, it holds
\begin{equation*}
    \fnorm{K(\tilde{\rho}) - \rho}^2 = \fnorm{K(\tilde{\rho})}^2
    + \fnorm{\rho}^2 - 2\tr{K(\tilde{\rho})\rho} \le 2 - \delta,
\end{equation*}
where the second inequality adopts the positivity of $K(\tilde{\rho})$ and
Ruhe's trace inequality. Thus, for any density matrices $\tilde{\rho}$ and
$\rho$ such that $\fnorm{\tilde{\rho} - \rho}^2 \ne 2$, it holds
\begin{equation*}
    F(K; \tilde{\rho}, \rho) = \frac{\fnorm{K(\tilde{\rho}) - \rho}^2
    - \fnorm{\tilde{\rho} - \rho}^2}{2 - \fnorm{\tilde{\rho} - \rho}^2}
    \le \frac{2 - \delta}{2}. 
\end{equation*}
When $\fnorm{\tilde{\rho} - \rho}^2 = 2$, it holds
\begin{equation*}
    F(K; \tilde{\rho}, \rho) =
    \lim_{\tilde{\rho}^\prime \rightarrow \tilde{\rho},
    \rho^\prime \rightarrow \rho}
    F(K; \tilde{\rho}^\prime, \rho^\prime) \le \frac{2 - \delta}{2}. 
\end{equation*}
Setting $q = \frac{2 - \delta}{2} < 1$, we have \eqref{eq:lemma-Kraus}. 

\end{proof}

Finally, we prove Theorem~\ref{thm:thm_Kraus} using Lemma~\ref{lem:Kraus}.

\begin{proof}[Proof of Theorem~\ref{thm:thm_Kraus}]

For each Kraus operator $K_k$, according to Lemma~\ref{lem:Kraus}, there
is a constant $0 \le q_k < 1$ such that $F(K_k; \tilde{\rho}, \rho) \le
q_k$, for any density matrices $\tilde{\rho}$ and $\rho$. Let $q =
\max\left\{q_1, \cdots, q_m\right\} < 1$. Following the derivations from
\eqref{eq:estimation} to \eqref{eq:supF}, \eqref{eq:Kraus-thm} is proved. 

\end{proof}

\section{Analysis of Mixed Error Propagation}
\label{sec:prob-mix-bound}

Following the flow in Section~\ref{sec:preliminary}, we first prove the
expected error propagation for a quantum circuit with only probabilistic
errors, as in Theorem~\ref{thm:thm_Prob}. Then, we combine the results of
Theorem~\ref{thm:thm_Kraus} and Theorem~\ref{thm:thm_Prob} to show that
the error propagation of a quantum circuit with both Kraus errors and
probabilistic errors can be bounded by $2 (1 - (1 - r)^m)$, where $m$ is
the number of gates and $r$ is a constant in $[0,1)$ independent of $m$.
The combined result is known as the error propagation of mixed error
models, which is formally stated in Theorem~\ref{thm:main_thm}.

\subsection{Analysis of Probabilistic Error Propagation}

For a quantum circuit with only probabilistic errors, we could write the
expected error propagation into two parts: at least one error occurs and
no error occurs. Since the difference between any two density matrices is
bounded by a constant, the ``at least one error occurs'' part is bounded
by its probability, which scales as $(1 - (1-p)^m)$. The ``no error
occurs'' part does not contribute to the error propagation and is omitted
directly. Put two together, we obtain the Theorem~\ref{thm:thm_Prob} for
the quantum circuit with probabilistic error only.

\begin{theorem}
    \label{thm:thm_Prob}
    Given a quantum circuit with a sequence of single-qubit or
    double-qubit gates $G_1, \cdots, G_m$ starting with an initial density
    matrix $\rho_0$. Each gate $G_k$ is implemented with a probabilistic
    error $P_k$ with error probability $p_k \in [0, 1)$. Then the expected
    error propagation is bounded as,
    \begin{equation*}
        \bbE \fnorm{\tilde{\rho}_m - \rho_m}^2 \le 2(1 - (1 - p)^m),
    \end{equation*}
    where $p$ is a constant, $0 \le p < 1$, independent of the number of
    qubits.
\end{theorem}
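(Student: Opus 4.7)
The plan is to split the expectation according to whether any probabilistic error actually fires during the circuit execution. The key observation is that, unlike a Kraus error, a probabilistic error at gate $k$ literally acts as the identity with probability $1-p_k$, so on the event that no error fires anywhere the noisy trajectory coincides exactly with the noiseless one, and on the complementary event one falls back on the universal constant bound $\fnorm{\tilde{\rho}_m - \rho_m}^2 \le 2$ from \eqref{eq:direct_bound}. This conditioning completely sidesteps the contraction-style estimate (the function $F$ in \eqref{eq:Fdef}) that was needed in the Kraus case.

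Concretely, I would let $A_k$ denote the event that $P_k$ acts as the identity, so $\bbP(A_k) = 1-p_k$ and the events $\{A_k\}_{k=1}^m$ are independent across gates. On the intersection $A = \bigcap_{k=1}^m A_k$, a routine induction on $k$ shows $\tilde{\rho}_k = \rho_k$ for every $k$, so the squared Frobenius error vanishes pointwise on $A$. On $A^c$, each realization of $\tilde{\rho}_m$ is still a legitimate density matrix, hence $\fnorm{\tilde{\rho}_m - \rho_m}^2 \le 2$ holds trajectory-wise. The tower property then yields
\begin{equation*}
    \bbE \fnorm{\tilde{\rho}_m - \rho_m}^2
    \le 2\,\bbP(A^c) = 2\bracket{1 - \prod_{k=1}^m (1-p_k)}.
\end{equation*}
Setting $p = \max_k p_k \in [0,1)$ and using monotonicity of $(1-x)^m$ produces the stated bound $2(1-(1-p)^m)$, with $p$ manifestly independent of the qubit number since it is a maximum of the input error probabilities.

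There is no genuine obstacle in this argument; the only items to verify carefully are that the inductive identification $\tilde{\rho}_k = \rho_k$ on $A$ follows directly from the definition \eqref{eq:Probability} together with the recursive construction of $\tilde{\rho}_k$, and that the hypothesis $p_k < 1$ is essential so that $\bbP(A) > 0$ and the bound remains informative. The same conditioning idea is what will glue Theorem~\ref{thm:thm_Prob} to Theorem~\ref{thm:thm_Kraus} in the proof of the mixed-error Theorem~\ref{thm:main_thm}: condition on the event ``no probabilistic error fires anywhere'', on which the circuit reduces to a Kraus-only circuit and Theorem~\ref{thm:thm_Kraus} applies; on its complement, use the trivial bound $2$. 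Taking $r = \max(p, q)$ with $q$ from Theorem~\ref{thm:thm_Kraus} and combining the two terms then collapses to $2(1-(1-r)^m)$, completing the mixed case.
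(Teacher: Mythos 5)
Your proposal is correct and follows essentially the same route as the paper: condition on whether any probabilistic error fires, use the universal bound $\fnorm{\tilde{\rho}_m-\rho_m}^2\le 2$ on the complement, note the error vanishes when no error occurs, and take $p=\max_k p_k$. The remark about reusing this conditioning for the mixed-error theorem also matches the paper's proof of Theorem~\ref{thm:main_thm}.
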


\begin{proof}

If at least one error occurs in applying $G_1, \cdots, G_m$, then we use
\eqref{eq:direct_bound} to bound $\fnorm{\tilde{\rho}_m - \rho_m}^2 \leq 2$.
Denote $p=\max_{1 \le k \le m}p_k$.
Then expected error propagation can be bounded as,
\begin{equation*}
    \begin{split}
        \bbE \fnorm{\tilde{\rho}_m - \rho_m}^2
        & \leq 2 \cdot \bbP(\text{at least one error occurs})
        + \fnorm{\rho_m - \rho_m}^2 \cdot \bbP(\text{no error}) \\
        & = 2 \bracket{1-\prod_{k=1}^m \bracket{1-p_k}}
        \le 2\bracket{1-\bracket{1-p}^m},
    \end{split}
\end{equation*}
where the probability of no error is $\prod_{k=1}^m \bracket{1-p_k}$, and
that at least one error occurs is $1-\prod_{k=1}^m \bracket{1-p_k}$.

\end{proof}

\subsection{Proof of Mixed Error Propagation}

Combining Theorem~\ref{thm:thm_Kraus} and Theorem~\ref{thm:thm_Prob}
together, we can then prove Theorem~\ref{thm:main_thm}, which
characterizes the expected error propagation of mixed error. The proof
flow can also be adapted to prove
Lemma~\ref{lem:linear-bound-for-mixed-error}.

\begin{proof}[Proof of Theorem~\ref{thm:main_thm}]

Suppose there are $m_K$ gates with Kraus errors and $m_P$ gates with
probabilistic errors, with error probability being $p_1,\cdots,p_{m_P}$.
If no probabilistic error occurs, of which the probability is
$\bbP(\text{no error}) = \prod_{k=1}^{m_P}\bracket{1-p_k}$, then the
system is reduced to a circuit with only Kraus errors. By Theorem
\ref{thm:thm_Kraus}, there is a constant $q \in [0,1)$ such that,
\begin{equation*}
    \fnorm{\tilde{\rho}_m - \rho_m}^2 \le 2\bracket{1-\bracket{1-q}^{m_K}}. 
\end{equation*}

We then adopt the decomposition as in the proof of
Theorem~\ref{thm:thm_Prob}. The expected error propagation is bounded as,
\begin{equation*}
    \begin{split}
        \bbE \fnorm{\tilde{\rho}_m - \rho_m}^2
        & \le 2 \cdot \bbP(\text{at least one error occurs})
        + \fnorm{\tilde{\rho}_m - \rho_m}^2 \cdot \bbP(\text{no error}) \\
        & \le 2 \bracket{1-\prod_{k=1}^{m_P}\bracket{1-p_k}} 
        + 2\bracket{1-\bracket{1-q}^{m_K}}\prod_{k=1}^{m_P}\bracket{1-p_k} \\
        & = 2\bracket{1 - \bracket{1-q}^{m_K}\prod_{k=1}^{m_P}\bracket{1-p_k}}
        \le 2\bracket{1-\bracket{1-r}^m}, 
    \end{split}
\end{equation*}
where $r = \max\left\{p_1,\cdots,p_{m_P},q\right\}<1$. 

\end{proof}

\section{Numerical Experiments}
\label{sec:numerical}

In this section, we simulate quantum circuits using the quantum device
backends provided by IBM Quantum Experience. In these quantum device
backends, both the Kraus error model and the probabilistic error model are
adopted to represent different types of quantum errors. Hence, these
backend simulators are well characterized by our mixed error models as in
Theorem~\ref{thm:main_thm}. We also simulate quantum circuits to validate
Theorem~\ref{thm:thm_Kraus} for the Kraus errors and
Theorem~\ref{thm:thm_Prob} for probabilistic errors separately with each
of these error models.

\emph{Kraus Error.}
We first simulate a single-qubit circuit consisting of identity gates with
only Kraus error in the form of \eqref{eq:Kraus_sq} with parameters being
$a_2 = b_2 = 0.001$, $a_3 = 0.08$, $b_3 = 0.008$, $a_1 = \sqrt{1 - a_2^2 -
a_3^2} \approx 0.9968$, and $b_1 = \sqrt{1 - b_2^2 - b_3^2} = 0.99997$. We
simulate the setup with various circuit depths from 100 to 2000. The
Frobenius norm of the density matrix difference is measured every $100$
gate, demonstrating the error propagation. We plot both the empirical
error propagation as well as the upper bound given in
Theorem~\ref{thm:thm_Prob} in Figure~\ref{fig:Kraus}. The constant $q$ in
Theorem~\ref{thm:thm_Prob} is not explicitly given. For the above Kraus
error model setup, we estimate $q$ by random sampling $10^8$ pairs of
density matrices $\tilde{\rho}$ and $\rho$ and find the maximum value of
$F(K; \tilde{\rho}, \rho)$. The matrices are generated by $R\Lambda R^T$,
where $R$ is a rotation matrix with uniformly sampled rotation angle, and
$\Lambda$ is uniformly sampled positive semi-definite diagonal matrix with
trace being $1$. The estimated constant $q$ is $5.620 \times 10^{-3}$. The
result is shown in Figure~\ref{fig:Kraus}. 

\begin{figure}[htb]
    \centering
    \includegraphics[width=0.6\textwidth]{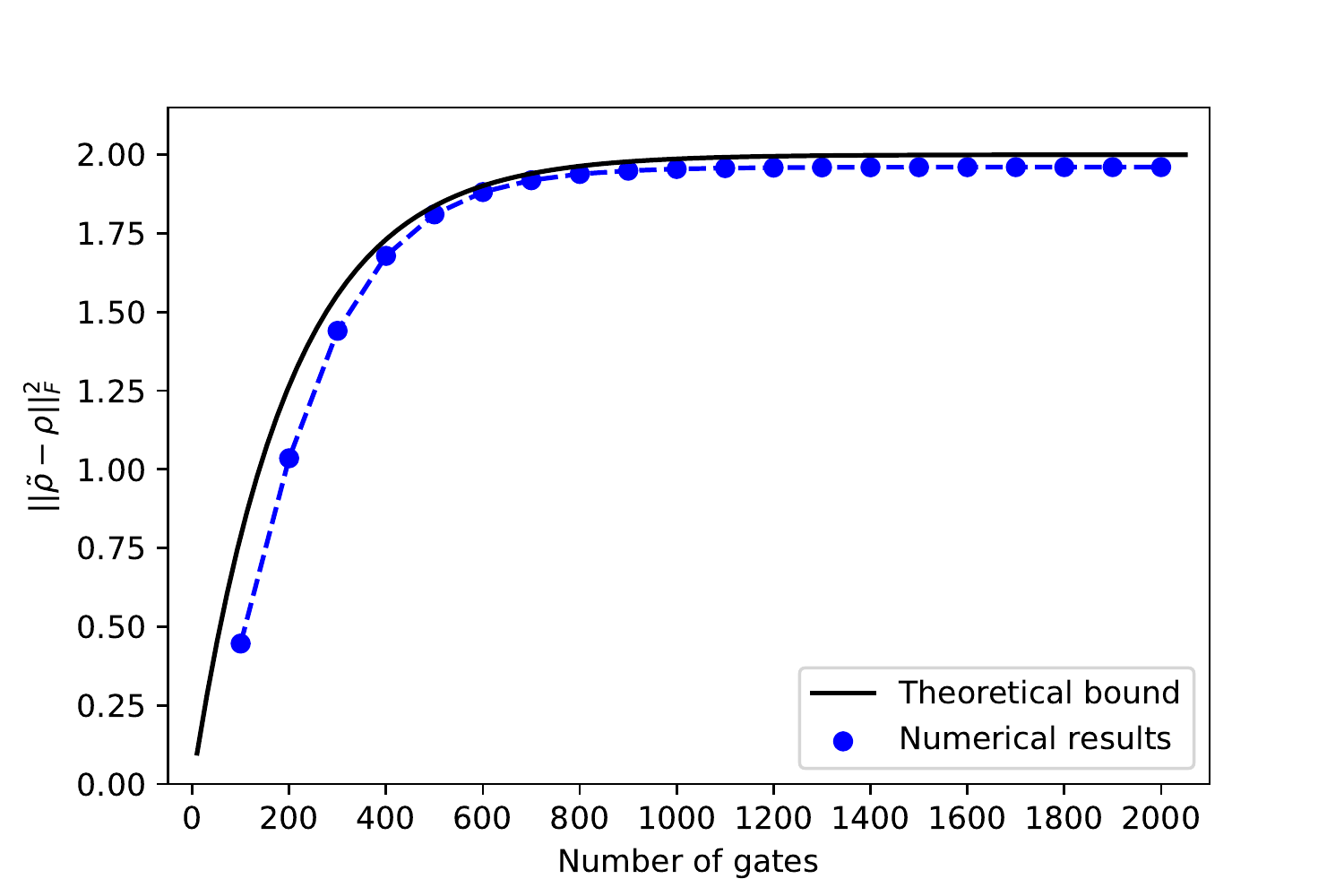}\\
    \caption{Numerical experiments for identity circuit with Kraus errors
    only. A quantum circuit consisting of single-qubit identity gates is
    simulated using 'qasm' backend. Each gate is implemented with a Kraus
    error. The resulting density matrix is measured for every $100$ gate.
    }\label{fig:Kraus}
\end{figure}

Theorem~\ref{thm:thm_Kraus} provides an upper bound for the error
propagation. As shown in Figure~\ref{fig:Kraus}, with the estimated $q$,
the numerical errors are bounded by our theoretical bound, and the two
curves are fairly close to each other. Since our result is a worst-case
upper bound, from Figure~\ref{fig:Kraus}, we are confident that our
analysis provides a fairly tight upper bound.

\emph{Probabilistic Error.}
We then simulate a single-qubit circuit consisting of identity gates with
only probabilistic error using `qasm' backend. For each gate, there is a
probability $p = 0.005$ for reset-to-$\ketbra{1}{1}$ error. Similarly, we
simulate the setup with various circuit depths from 100 to 2000. The
Frobenius norm of the density matrix difference is measured every $100$
gate, demonstrating the error propagation. Each circuit is repeatedly
executed $8192$ times to approximate the expectation value of the error.
We plot both the empirical expected error propagation as well as the upper
bound given in Theorem~\ref{thm:thm_Prob} in Figure~\ref{fig:Prob}.

\begin{figure}[htb]
    \centering
    \includegraphics[width=0.6\textwidth]{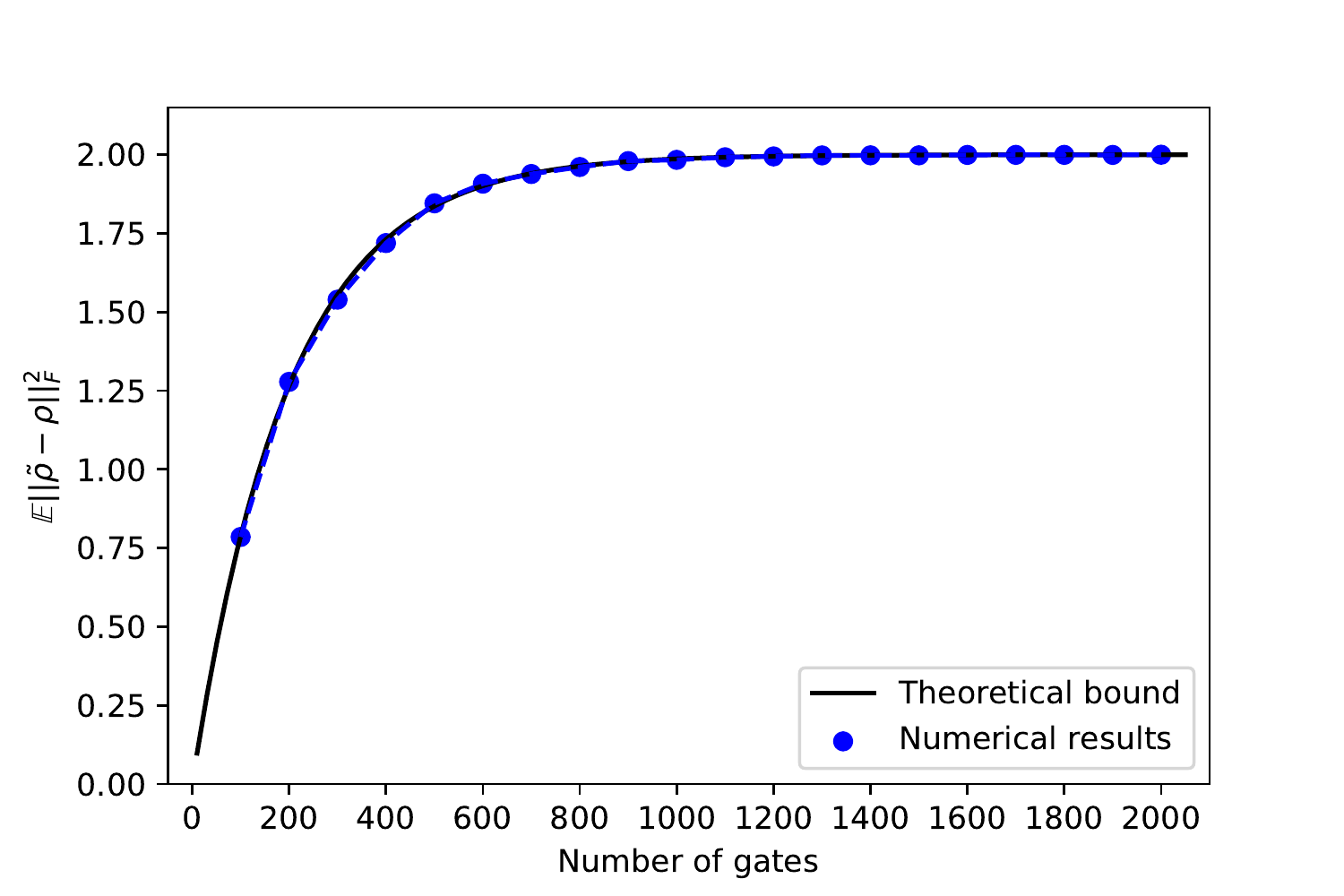}\\
    \caption{Numerical experiments for single-qubit identity circuit with
    probabilistic errors only. Each gate is implemented with a
    reset-to-$\ketbra{1}{1}$ error with probability $p=0.005$. Each
    circuit is repeatedly executed for $8192$ to approximate the error
    expectation value.} \label{fig:Prob}
\end{figure}

Our analysis results in Theorem~\ref{thm:thm_Prob} provides an upper bound
for the error expectation value, which is not necessarily an upper bound
for the sample mean of the error. While, as shown in
Figure~\ref{fig:Prob}, the numerical errors are tightly bounded by our
theoretical bound. 

\emph{Mixed Error.}
Finally, we simulate a multi-qubit circuit with both Kraus error and
probabilistic errors. The base circuit is chosen to be the same as that in
Figure~\ref{fig:QFT-dem}, i.e., $3$-, $4$-, and $5$-qubit QFT circuits.
The quantum simulator used in this case is the FakeVigo backend, which is
configured to simulate IBM Vigo quantum computer. For the purpose of
verifying our theoretical bound, we turn off the readout error and
measurement error, and only keep the gate errors. The constant in
Theorem~\ref{thm:thm_Prob} is read from the configuration, $p=1.076 \times
10^{-2}$. The constants for various Kraus errors in
Theorem~\ref{thm:thm_Kraus} are estimated separately in the same way as
that in the Kraus error numerical part, and the overall constant is found to be
$q \approx 4.769\times10^{-3}$. Thus, the constant $r$ in
Theorem~\ref{thm:main_thm} is set to be $r = \max\{p,q\} = 1.076 \times
10^{-2}$. Since the resulting upper bound in Theorem~\ref{thm:main_thm} is
independent of the number of qubits, we plot the theoretical upper bound for
all three multi-qubit circuits as the black curve in
Figure~\ref{fig:QFT-mixed}. 

\begin{figure}[htb]
    \centering
    \includegraphics[width=0.6\textwidth]{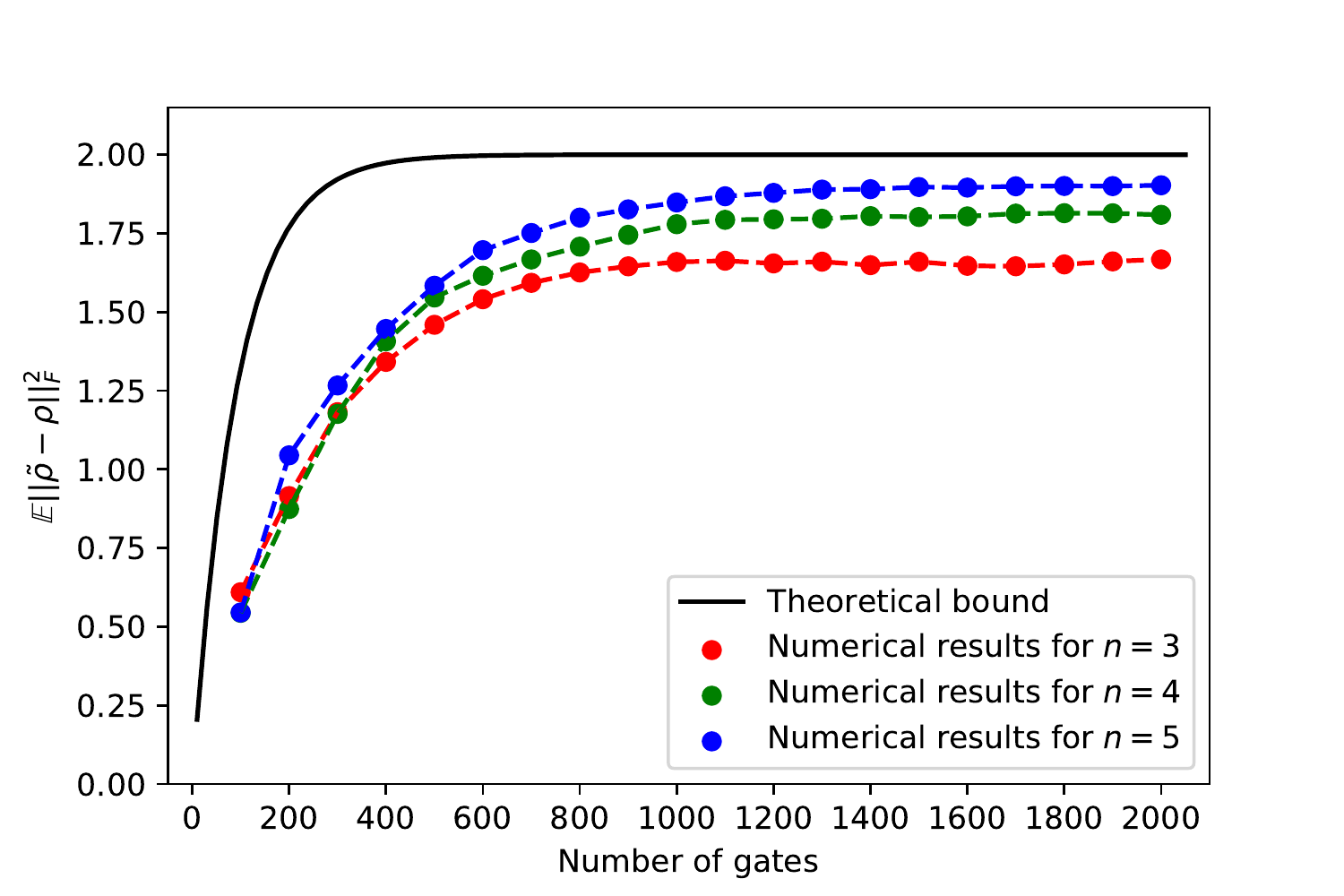}\\
    \caption{Numerical experiments for repeated $3$, $4$, and $5$-qubit
    QFT circuits on FakeVigo backend with only gate errors. Each circuit
    is repeatedly executed for $8192$ to approximate the error expectation
    value.} \label{fig:QFT-mixed}
\end{figure}

According to Figure~\ref{fig:QFT-mixed}, we find that our analysis result
in Theorem~\ref{thm:main_thm} is indeed an upper bound for all three
quantum circuits on 3-, 4-, and 5-qubit systems. Vigo is a 5-qubit quantum
computer. Different qubits in Vigo actually have different error profiles.
If we bind the quantum circuit to specific qubits and estimate the
constant separately, we could have three different theoretical upper
bounds. In this experiment, we simply estimate the constant for all 5
qubits together and obtain an upper bound for all quantum circuits on
Vigo. As we observe from Figure~\ref{fig:QFT-mixed}, quantum circuits with
a larger number of qubits lead to larger errors. Further, in the current
quantum computers, even for quantum computers with a small number of
qubits, the error quickly grows to a non-negligible level. Quantum
circuits with tens to hundreds of depths would be the limit on Vigo. For
quantum computers with a larger number of quantum qubits, numerically
simulating the density matrix error results is not possible due to the
exponentially increasing size of the density matrix. However, our
theoretical upper bound could still be calculated and provide a fairly
good estimation on the error growth.

\section{Conclusion}
\label{sec:conclusion}

Quantum computing is becoming a promising tool for computational tasks.
While the quantum hardware is not perfect and is expected to bear with
large noise for a long time. The performance of quantum computers
nowadays is limited by quantum gate errors and sampling errors. Therefore,
we aim to characterize the quantum error that grows with the number of gates in
a quantum circuit.

In this work, we first use traditional numerical analysis methods to prove
that quantum error grows linearly with the depth of a circuit. However, a
simple calculation of the density matrices suggests that the error could
not grow linearly but hit a plateau towards the end instead. We,
therefore, provided a more carefully analyzed upper bound that better
characterizes the growth of the quantum error with the depth of a circuit.
To be more specific, we analyzed the probabilistic error model, the Kraus
error model, and the mix of both. For all three cases, the error grows as
$\sim (1 - (1 - c)^m)$, where $c$ is a constant independent of qubit
number and circuit depth ($c$ is different for different error models),
and $m$ is the number of quantum gates in the circuit. Finally, we did
numerical experiments on the simulator of the Vigo quantum computer
provided by IBM Quantum Experience. Numerical results for identity quantum
circuits and QFT circuits suggested that our theoretical bound is tight.
The errors of QFT circuits are well controlled by our bound. 

An immediate future direction is to explore different metrics of the
quantum error propagations. We know that all metrics in a
finite-dimensional Hilbert space are equivalent. Hence, our results could
be directly extended to other metrics with an extra dimension dependent
constant. A more careful analysis could reduce such a constant. Another
future direction is to obtain a qualititive estimation of the constant $q$,
$p$, and $r$ in Theorem~\ref{thm:thm_Kraus}, Theorem~\ref{thm:thm_Prob},
and Theorem~\ref{thm:main_thm} respectively.

\bibliography{library}

@article{Ruhe1970,
    title={Perturbation bounds for means of eigenvalues and invariant subspaces},
    author={Ruhe, Axel},
    journal={BIT Numerical Mathematics},
    volume={10},
    number={3},
    pages={343--354},
    year={1970},
    publisher={Springer}
}

@article{Gallot2017,
    author = {Paul Gallot and Anca Muscholl and Gabriele Puppis and Sylvain Salvati},
    doi = {10.4230/LIPIcs},
    issn = {1868-8969},
    journal = {Leibniz International Proceedings in Informatics},
    pages = {14},
    title = {On the decomposition of finite-valued streaming string transducers},
    volume = {66},
    year = {2017},
}

@article{Deutsch1992,
    title={Rapid solution of problems by quantum computation},
    author={Deutsch, David and Jozsa, Richard},
    journal={Proceedings of the Royal Society of London. Series A: Mathematical and Physical Sciences},
    volume={439},
    number={1907},
    pages={553--558},
    year={1992},
    publisher={The Royal Society London}
}

@article{Simon1997,
    title={On the power of quantum computation},
    author={Simon, Daniel R},
    journal={SIAM journal on computing},
    volume={26},
    number={5},
    pages={1474--1483},
    year={1997},
    publisher={SIAM}
}

@inproceedings{Schor1994,
    title={Algorithms for quantum computation: discrete logarithms and factoring},
    author={Shor, Peter W},
    booktitle={Proceedings 35th annual symposium on foundations of computer science},
    pages={124--134},
    year={1994},
    organization={Ieee}
}

@inproceedings{Grover1996,
    title={A fast quantum mechanical algorithm for database search},
    author={Grover, Lov K},
    booktitle={Proceedings of the twenty-eighth annual ACM symposium on Theory of computing},
    pages={212--219},
    year={1996}
}

@article{Shor1995,
    title = {Scheme for reducing decoherence in quantum computer memory},
    author = {Shor, Peter W.},
    journal = {Phys. Rev. A},
    volume = {52},
    issue = {4},
    pages = {R2493--R2496},
    numpages = {0},
    year = {1995},
    month = {Oct},
    publisher = {American Physical Society},
    doi = {10.1103/PhysRevA.52.R2493},
    url = {https://link.aps.org/doi/10.1103/PhysRevA.52.R2493}
}

@article{Calderbank1996,
    title = {Good quantum error-correcting codes exist},
    author = {Calderbank, A. R. and Shor, Peter W.},
    journal = {Phys. Rev. A},
    volume = {54},
    issue = {2},
    pages = {1098--1105},
    numpages = {0},
    year = {1996},
    month = {Aug},
    publisher = {American Physical Society},
    doi = {10.1103/PhysRevA.54.1098},
    url = {https://link.aps.org/doi/10.1103/PhysRevA.54.1098}
}

@article{Steane1996,
    author = {Steane, Andrew },
    title = {Multiple-particle interference and quantum error correction},
    journal = {Proceedings of the Royal Society of London. Series A: Mathematical, Physical and Engineering Sciences},
    volume = {452},
    number = {1954},
    pages = {2551-2577},
    year = {1996},
    doi = {10.1098/rspa.1996.0136},
    URL = {https://royalsocietypublishing.org/doi/abs/10.1098/rspa.1996.0136},
    eprint = {https://royalsocietypublishing.org/doi/pdf/10.1098/rspa.1996.0136}
}

@article{Gottesman1996,
    title = {Class of quantum error-correcting codes saturating the quantum Hamming bound},
    author = {Gottesman, Daniel},
    journal = {Phys. Rev. A},
    volume = {54},
    issue = {3},
    pages = {1862--1868},
    numpages = {0},
    year = {1996},
    month = {Sep},
    publisher = {American Physical Society},
    doi = {10.1103/PhysRevA.54.1862},
    url = {https://link.aps.org/doi/10.1103/PhysRevA.54.1862}
}

@ARTICLE{Calderbank1998,
    author={Calderbank, A.R. and Rains, E.M. and Shor, P.M. and Sloane,
    N.J.A.},
    journal={IEEE Transactions on Information Theory},
    title={Quantum error correction via codes over GF(4)},
    year={1998},
    volume={44},
    number={4},
    pages={1369-1387},
    doi={10.1109/18.681315}
}

@article{Calderbank1997,
    title = {Quantum Error Correction and Orthogonal Geometry},
    author = {Calderbank, A. R. and Rains, E. M. and Shor, P. W. and Sloane, N. J. A.},
    journal = {Phys. Rev. Lett.},
    volume = {78},
    issue = {3},
    pages = {405--408},
    numpages = {0},
    year = {1997},
    month = {Jan},
    publisher = {American Physical Society},
    doi = {10.1103/PhysRevLett.78.405},
    url = {https://link.aps.org/doi/10.1103/PhysRevLett.78.405}
}

@article{Zeng2021,
    author = {Zeng, Jinfeng and Wu, Zipeng and Cao, Chenfeng and Zhang, Chao and Hou, Shi-Yao and Xu, Pengxiang and Zeng, Bei},
    title = {Simulating noisy variational quantum eigensolver with local noise models},
    journal = {Quantum Engineering},
    volume = {3},
    number = {4},
    pages = {e77},
    doi = {https://doi.org/10.1002/que2.77},
    url = {https://onlinelibrary.wiley.com/doi/abs/10.1002/que2.77},
    eprint = {https://onlinelibrary.wiley.com/doi/pdf/10.1002/que2.77},
    year = {2021}
}

@article{Sharma2020,
    doi = {10.1088/1367-2630/ab784c},
    url = {https://doi.org/10.1088/1367-2630/ab784c},
    year = 2020,
    month = {apr},
    publisher = {{IOP} Publishing},
    volume = {22},
    number = {4},
    pages = {043006},
    author = {Kunal Sharma and Sumeet Khatri and M Cerezo and Patrick J Coles},
    title = {Noise resilience of variational quantum compiling},
    journal = {New Journal of Physics},
}

@ARTICLE{Bierman2022,
       author = {{Bierman}, Joel and {Li}, Yingzhou and {Lu}, Jianfeng},
        title = "{Quantum Orbital Minimization Method for Excited States Calculation on Quantum Computer}",
      journal = {arXiv e-prints},
         year = 2022,
        month = jan,
          eid = {arXiv:2201.07963},
        pages = {arXiv:2201.07963},
archivePrefix = {arXiv},
       eprint = {2201.07963},
 primaryClass = {physics.chem-ph},
}

@article{Muller-Hermes2016,
   author = {Müller-Hermes, Alexander and França, Daniel Stilck and Wolf, Michael M.},
   title = {Relative entropy convergence for depolarizing channels},
   journal = {Journal of Mathematical Physics},
   volume = {57},
   number = {2},
   pages = {022202},
   keywords = {entropy,quantum theory},
   DOI = {10.1063/1.4939560},
   url = {https://aip.scitation.org/doi/abs/10.1063/1.4939560},
   year = {2016},
   type = {Journal Article}
}

@misc{Deshpande2021,
   author = {Deshpande, Abhinav and Niroula, Pradeep and Shtanko, Oles and
   Gorshkov, Alexey V. and Fefferman, Bill and Gullans, Michael J.},
   title = {Tight bounds on the convergence of noisy random circuits to
   the uniform distribution},
   note = {arXiv:2112.00716},
   url = {https://ui.adsabs.harvard.edu/abs/2021arXiv211200716D},
   year = {2021},
   type = {Electronic Article}
}

@misc{Flannigan2022,
   author = {Flannigan, Stuart and Pearson, Natalie and Low, Guang Hao and
   Buyskikh, Anton S. and Bloch, Immanuel and Zoller, Peter and Troyer,
   Matthias and Daley, Andrew J.},
   title = {Propagation of errors and quantitative quantum simulation with
   quantum advantage},
   note = {arXiv:2204.13644},
   url = {https://ui.adsabs.harvard.edu/abs/2022arXiv220413644F},
   year = {2022},
   type = {Electronic Article}
}

\appendix

\gdef\thesection{\Alph{section}}
\makeatletter
\renewcommand\@seccntformat[1]{\appendixname\ \csname
the#1\endcsname.\hspace{0.5em}}
\makeatother

\section{Probabilistic error model in Kraus form}
\label{app:kraus-prob}

Kraus formulations of typical probabilistic errors are listed below. 

\begin{itemize}
    \item Bit flip with probability $p_X$: 
    \begin{equation*}
        V_1 = \sqrt{p_X} I = \sqrt{p_X}
        \begin{pmatrix}
            1 & 0 \\
            0 & 1
        \end{pmatrix}, 
        V_2 = \sqrt{1-p_X} X = \sqrt{1-p_X}
        \begin{pmatrix}
            0 & 1 \\
            1 & 0
        \end{pmatrix}. 
    \end{equation*}
    \item Phase flip with probability $p_Z$: 
    \begin{equation*}
        V_1 = \sqrt{p_Z} I = \sqrt{p_Z}
        \begin{pmatrix}
            1 & 0 \\
            0 & 1
        \end{pmatrix}, 
        V_2 = \sqrt{1-p_Z} Z = \sqrt{1-p_Z}
        \begin{pmatrix}
            1 & 0 \\
            0 & -1
        \end{pmatrix}. 
    \end{equation*}
    \item Bit-phase flip with probability $p_Y$: 
    \begin{equation*}
        V_1 = \sqrt{p_Y} I = \sqrt{p_Y}
        \begin{pmatrix}
            1 & 0 \\
            0 & 1
        \end{pmatrix}, 
        V_2 = \sqrt{1-p_Y} Y = \sqrt{1-p_Y}
        \begin{pmatrix}
            0 & -\imath \\
            \imath & 0
        \end{pmatrix}. 
    \end{equation*}
    \item Reset to $\ketbra{0}{0}$ with probability $p_{R0}$: 
    \begin{equation*}
        V_1 = \sqrt{p_{R0}}
        \begin{pmatrix}
            1 & 0 \\
            0 & 1
        \end{pmatrix}, 
        V_2 = \sqrt{1-p_{R0}}
        \begin{pmatrix}
            1 & 0 \\
            0 & 0
        \end{pmatrix}, 
        V_3 = \sqrt{1-p_{R0}}
        \begin{pmatrix}
            0 & 1 \\
            0 & 0
        \end{pmatrix}. 
    \end{equation*}
    \item Reset to $\ketbra{1}{1}$ with probability $p_{R1}$: 
    \begin{equation*}
        V_1 = \sqrt{p_{R1}}
        \begin{pmatrix}
            1 & 0 \\
            0 & 1
        \end{pmatrix}, 
        V_2 = \sqrt{1-p_{R1}}
        \begin{pmatrix}
            1 & 0 \\
            0 & 0
        \end{pmatrix}, 
        V_3 = \sqrt{1-p_{R1}}
        \begin{pmatrix}
            0 & 0 \\
            1 & 0
        \end{pmatrix}. 
    \end{equation*}
    \item Depolarizing with probability $p_D$: 
    \begin{equation*}
        V_1 = \sqrt{\frac{1-3p_D}{4}}I, 
        V_2 = \frac{\sqrt{p_D}}{2}X, 
        V_3 = \frac{\sqrt{p_D}}{2}Y, 
        V_4 = \frac{\sqrt{p_D}}{2}Z. 
    \end{equation*}
\end{itemize}

\section{Double-qubit Kraus error model}
\label{App:Kraus_double_qubit}

The detailed expressions of $V_j$s in \eqref{eq:Kraus_dq} are of forms,
\begin{equation*}\label{eq:Kraus_dq_matrix_form}
    \begin{split}
        & V_{1} =
        \begin{pmatrix}
            a_{1} & 0 & 0 & 0 \\
            0 & b_{1} & 0 & 0 \\
            0 & 0 & c_{1} & 0 \\
            0 & 0 & 0 & d_{1}
        \end{pmatrix}, \quad
        V_{2} =
        \begin{pmatrix}
            a_{2} & 0 & 0 & 0 \\
            0 & b_{2} & 0 & 0 \\
            0 & 0 & c_{2} & 0 \\
            0 & 0 & 0 & d_{2}
        \end{pmatrix}, \quad
        V_{3} =
        \begin{pmatrix}
            a_{3} & 0 & 0 & 0 \\
            0 & b_{3} & 0 & 0 \\
            0 & 0 & c_{3} & 0 \\
            0 & 0 & 0 & d_{3}
        \end{pmatrix}, \\
        & V_{4} =
        \begin{pmatrix}
            a_{4} & 0 & 0 & 0 \\
            0 & b_{4} & 0 & 0 \\
            0 & 0 & c_{4} & 0 \\
            0 & 0 & 0 & d_{4}
        \end{pmatrix}, \quad
        V_{5} =
        \begin{pmatrix}
            0 & 0 & 0 & 0 \\
            0 & 0 & 0 & 0 \\
            a_{5} & 0 & 0 & 0 \\
            0 & b_{5} & 0 & 0
        \end{pmatrix}, \quad
        V_{6} =
        \begin{pmatrix}
            0 & 0 & 0 & 0 \\
            0 & 0 & 0 & 0 \\
            a_{6} & 0 & 0 & 0 \\
            0 & b_{6} & 0 & 0
        \end{pmatrix}, \\
        & V_{7} =
        \begin{pmatrix}
            0 & 0 & 0 & 0 \\
            a_{7} & 0 & 0 & 0 \\
            0 & 0 & 0 & 0 \\
            0 & 0 & c_{7} & 0
        \end{pmatrix}, \quad
        V_{8} =
        \begin{pmatrix}
            0 & 0 & 0 & 0 \\
            a_{8} & 0 & 0 & 0 \\
            0 & 0 & 0 & 0 \\
            0 & 0 & c_{8} & 0
        \end{pmatrix}, \quad
        V_{9} =
        \begin{pmatrix}
            0 & b_{7} & 0 & 0 \\
            0 & 0 & 0 & 0 \\
            0 & 0 & 0 & d_{7} \\
            0 & 0 & 0 & 0
        \end{pmatrix}, \\
        & V_{10} =
        \begin{pmatrix}
            0 & b_{8} & 0 & 0 \\
            0 & 0 & 0 & 0 \\
            0 & 0 & 0 & d_{8} \\
            0 & 0 & 0 & 0
        \end{pmatrix}, \quad
        V_{11} =
        \begin{pmatrix}
            0 & 0 & c_{5} & 0 \\
            0 & 0 & 0 & d_{5} \\
            0 & 0 & 0 & 0 \\
            0 & 0 & 0 & 0
        \end{pmatrix}, \quad
        V_{12} =
        \begin{pmatrix}
            0 & 0 & c_{6} & 0 \\
            0 & 0 & 0 & d_{6} \\
            0 & 0 & 0 & 0 \\
            0 & 0 & 0 & 0
        \end{pmatrix}, \\
        & V_{13} =
        \begin{pmatrix}
            0 & 0 & 0 & 0 \\
            0 & 0 & 0 & 0 \\
            0 & 0 & 0 & 0 \\
            a_{9} & 0 & 0 & 0
        \end{pmatrix}, \quad
        V_{14} =
        \begin{pmatrix}
            0 & 0 & 0 & 0 \\
            0 & 0 & 0 & 0 \\
            0 & b_{9} & 0 & 0 \\
            0 & 0 & 0 & 0
        \end{pmatrix}, \quad
        V_{15} =
        \begin{pmatrix}
            0 & 0 & 0 & 0 \\
            0 & 0 & c_{9} & 0 \\
            0 & 0 & 0 & 0 \\
            0 & 0 & 0 & 0
        \end{pmatrix}, \\
        & V_{16} =
        \begin{pmatrix}
            0 & 0 & 0 & d_{9} \\
            0 & 0 & 0 & 0 \\
            0 & 0 & 0 & 0 \\
            0 & 0 & 0 & 0
        \end{pmatrix}
    \end{split}
\end{equation*}
The matrices $V_j$ obey normalization condition \eqref{eq:normalize},
which is equivalent to
\begin{equation*}
    \sum_{j=1}^{9} a_j^2 = 1, \quad
    \sum_{j=1}^{9} b_j^2 = 1, \quad
    \sum_{j=1}^{9} c_j^2 = 1, \quad
    \sum_{j=1}^{9} d_j^2 = 1.
\end{equation*}
Similar to the single-qubit Kraus error model, $V_1$ is close to an
identity matrix, while others are close to zero matrices, i.e., $a_1, b_1,
c_1, d_1$ are close to $1$, and other parameters are close to $0$. Also,
the parameters $a_j, b_j, c_j, d_j$ in \eqref{eq:Kraus_dq_matrix_form}
admits,
\begin{equation*}
    \det{
    \begin{pmatrix}
        a_1 & a_2 & a_3 & a_4 \\
        b_1 & b_2 & b_3 & b_4 \\
        c_1 & c_2 & c_3 & c_4 \\
        d_1 & d_2 & d_3 & d_4 \\
    \end{pmatrix}
    } \neq 0.
\end{equation*}

\section{Proof of error model composition lemmas}
\label{app:error-model-composition-lemmas}

We first prove Lemma~\ref{lem:system-of-eq-Kraus} to pave the path to the
proofs of other lemmas.

\begin{lemma}
    \label{lem:system-of-eq-Kraus}
    There exist a solution $a_1, a_2, b_1, b_2$ for a system of equations,
    \begin{equation} \label{eq:system-of-eq-Kraus}
        \begin{cases}
            a_1^2 + a_2^2 = s\\
            b_1^2 + b_2^2 = t\\
            a_1 b_1 + a_2 b_2 = r
        \end{cases}
    \end{equation}
    where $0 \leq s, t \leq 1$, and $s t \geq r^2$.
\end{lemma}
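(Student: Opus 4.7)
The plan is to view $(a_1,a_2)$ and $(b_1,b_2)$ as vectors in $\mathbb{R}^2$ and give an explicit construction. Geometrically, the first two equations require these vectors to have lengths $\sqrt{s}$ and $\sqrt{t}$, while the third requires their inner product to equal $r$. Cauchy--Schwarz tells us such a configuration is possible if and only if $|r| \le \sqrt{s}\sqrt{t}$, which is exactly the hypothesis $st \ge r^2$. So the existence question reduces to writing down a specific pair of vectors realizing any admissible inner product, which should be straightforward.

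Concretely, I would first dispose of the degenerate cases. If $s=0$, then the hypothesis $st \ge r^2$ forces $r=0$, and we may take $a_1=a_2=0$ together with any $b_1,b_2$ satisfying $b_1^2+b_2^2=t$ (for example $b_1=\sqrt{t}$, $b_2=0$); the case $t=0$ is symmetric. Assume from now on $s,t>0$. Then I would set $a_1=\sqrt{s}$ and $a_2=0$, which immediately satisfies the first equation. The third equation collapses to $b_1\sqrt{s}=r$, forcing $b_1=r/\sqrt{s}$. Substituting into the second equation gives $b_2^2=t-r^2/s=(st-r^2)/s$, which is nonnegative by the hypothesis $st\ge r^2$, so we may choose $b_2=\sqrt{(st-r^2)/s}$ (real). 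A direct substitution verifies all three equations.

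There is no substantive obstacle here; the only subtlety is remembering to separate out the degenerate cases $s=0$ or $t=0$, where the formula $b_1=r/\sqrt{s}$ is undefined but the hypothesis already pins down $r=0$ and makes the system trivial to solve. The construction will be used downstream when composing Kraus operators and matching the resulting diagonal/off-diagonal block structure back to the canonical form \eqref{eq:Kraus_sq_matrix_form}, where exactly this kind of Gram-matrix realization problem appears.
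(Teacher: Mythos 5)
Your proof is correct. It rests on the same underlying observation as the paper's — that $st \ge r^2$ is exactly the Cauchy--Schwarz feasibility condition for two vectors in $\mathbb{R}^2$ to have prescribed squared lengths $s,t$ and inner product $r$ — but the realization differs. The paper parametrizes $(a_1,a_2)=\sqrt{s}(\cos\theta,\sin\theta)$ and $(b_1,b_2)=\sqrt{t}(\cos\phi,\sin\phi)$ and solves $\sqrt{st}\cos(\theta-\phi)=r$ for the angle difference, whereas you write down the explicit Cholesky-type solution $a_1=\sqrt{s}$, $a_2=0$, $b_1=r/\sqrt{s}$, $b_2=\sqrt{(st-r^2)/s}$, which one verifies by direct substitution. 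The two approaches buy essentially the same thing, but yours is slightly more careful on the boundary: the paper's step $-1\le r/\sqrt{st}\le 1$ silently divides by $\sqrt{st}$, which is undefined when $s=0$ or $t=0$ — a case the hypotheses permit — while you dispose of that degenerate case explicitly (noting it forces $r=0$) before ever dividing by $\sqrt{s}$. Either argument suffices for the downstream use in the Kraus composition lemmas.
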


\begin{proof}

Due to the fact that $0 \leq s, t \leq 1$, we could parametrize $a_1, a_2,
b_1, b_2$ as
\begin{equation*}
    \begin{split}
        a_1 = \sqrt{s}\cos\theta, \quad
        a_2 = \sqrt{s}\sin\theta, \\
        b_1 = \sqrt{t}\cos\phi, \quad
        b_2 = \sqrt{t}\sin\phi. 
    \end{split}
\end{equation*}
The last equation in \eqref{eq:system-of-eq-Kraus}
then admits,
\begin{equation*}
    \sqrt{st} \cos(\theta - \phi) = r. 
\end{equation*}
Since $st \geq r^2$, we have $-1 \leq \frac{r}{\sqrt{st}} \leq 1$. Hence,
there is a unique solution for $\theta - \phi$. Therefore, the system of
equations \eqref{eq:system-of-eq-Kraus} has infinitely many solutions.

\end{proof}

\begin{proof}[Proof of Lemma~\ref{lem:Kraus-composition-sq}]
We denote
\begin{equation*}
    \begin{split}
        A_j &= 1-a_{j3}^2-b_{j3}^2, \\
        B_j &= b_{j3}^2, \\
        C_j &= a_{j1}b_{j1}+a_{j2}b_{j2},
    \end{split}
\end{equation*}
for $j = 1, 2$. The Kraus error model \eqref{eq:Kraus_sq} then could be
written as
\begin{equation*}
    K_j(\rho) =
    \begin{pmatrix}
        A_j r_{11} + B_j & C_j r_{12} \\
        C_j r_{21} & 1 - A_j r_{11} - B_j
    \end{pmatrix}, 
\end{equation*}
where
\begin{equation*}
    \rho =
    \begin{pmatrix}
        r_{11} & r_{12} \\
        r_{21} & r_{22}
    \end{pmatrix}, 
\end{equation*}
and $\tr{\rho} = 1$.
The composition of $K_1$ and $K_2$ admits,
\begin{equation*}
    K_2(K_1(\rho)) =
    \begin{pmatrix}
        A_1 A_2 r_{11} + B_1 A_2 + B_2 & C_1 C_2 r_{12} \\
        C_1 C_2 r_{21} & 1 - A_1 A_2 r_{11} - B_1 A_2 - B_2
    \end{pmatrix}. 
\end{equation*}

Therefore, by requiring the parameters $a_1,b_1,a_2,b_2,a_3,b_3$ satisfying
\begin{equation} \label{eq:compose}
    \begin{cases}
        A_1 A_2 = 1 - a_3^2 - b_3^2 = a_1^2 + a_2^2 + b_1^2 + b_2^2-1 \\
        B_1 A_2 + B_2 = b_3^2 = 1 - b_1^2 - b_2^2 \\
        C_1 C_2 = a_1 b_1 + a_2 b_2
    \end{cases},
\end{equation}
we are sure that the Kraus error is in the form of
\eqref{eq:Kraus_sq_matrix_form} and satisfies $K = K_2 \circ K_1$. We
rearrange \eqref{eq:compose} and obtain,
\begin{equation*}
    \begin{cases}
        a_1^2 + a_2^2 = A_1 A_2 + B_1 A_2 + B_2 \\
        b_1^2 + b_2^2 = 1 - B_1 A_2 - B_2 \\
        a_1 b_1 + a_2 b_2 = C_1 C_2
    \end{cases}.
\end{equation*}
We first notice that 
\begin{equation*}
    A_1 A_2 + B_1 A_2 + B_2 = 1 - (1 - a_{13}^2) a_{23}^2
    - (1 - b_{23}^2)a_{13}^2 \le 1,
\end{equation*}
and
\begin{equation*}
    A_1 A_2 + B_1 A_2 + B_2 \ge a_{13}^2 a_{23}^2 + a_{13}^2 b_{23}^2
    \ge 0,
\end{equation*}
where the second inequality adopts the assumption $a_{13}^2 + a_{23}^2 \le
1$.

We also have,
\begin{equation*}
    1 - B_1 A_2 - B_2 = 1 - b_{13}^2 (1 - a_{23}^2)
    - b_{23}^2 (1 - b_{13}^2) \le 1,
\end{equation*}
and
\begin{equation*}
    1 - B_1 A_2 - B_2 \ge b_{13}^2 a_{23}^2 + b_{13}^2 b_{23}^2 \ge 0,
\end{equation*}
where the second inequality adopts the assumption $b_{13}^2 + b_{23}^2 \le
1$.

Using the Cauchy inequality, we obtain
\begin{equation*}
    \begin{split}
        \bracket{C_1C_2}^2 \le & \bracket{a_{11}^2 + a_{12}^2}
        \bracket{b_{11}^2 + b_{12}^2} \bracket{a_{21}^2 + a_{22}^2}
        \bracket{b_{21}^2 + b_{22}^2} \\
        = & \bracket{1 - a_{13}^2} \bracket{1 - b_{13}^2}
        \bracket{1 - a_{23}^2} \bracket{1-b_{23}^2}, 
    \end{split}
\end{equation*}
and, hence,
\begin{equation*}
    \begin{split}
        & \bracket{A_1 A_2 + B_1 A_2 + B_2} \bracket{1 - B_1 A_2 - B_2}
        - \bracket{C_1C_2}^2 \\
        \ge & a_{13}^2 b_{23}^2 \bracket{1 - b_{13}^2 - b_{23}^2}
        + a_{23}^2 b_{13}^2 \bracket{1 - a_{13}^2 - a_{23}^2}
        + a_{13}^2 b_{13}^2 \bracket{a_{23}^4 + a_{23}^2 b_{23}^2
        + b_{23}^4} \ge 0. 
    \end{split}
\end{equation*}
Finally, by Lemma~\ref{lem:system-of-eq-Kraus}, we know that
\eqref{eq:compose} has solutions and $K_2 \circ K_1$ can be in the form of
\eqref{eq:Kraus_sq_matrix_form}.

\end{proof}

\begin{proof}[Proof of Lemma~\ref{lem:Kraus-composition-dq}]
    We simply have
    \begin{equation}
        K_2 \circ K_1 = (K_{11}\otimes K_{21}) \circ (K_{12}
        \otimes K_{22}) = (K_{11} \circ K_{12}) \otimes (K_{21}
        \circ K_{22}), 
    \end{equation}
    which is a double-qubit Kraus model in the form of \eqref{eq:Kraus_dq}. 
\end{proof}

\begin{proof}[Proof of Lemma~\ref{lem:prob-to-Kraus}]

The $X, Y, Z$ errors work on a density matrix
\begin{equation*}
    \rho =
    \begin{pmatrix}
        r_{11} & r_{12} \\
        r_{21} & r_{22}
    \end{pmatrix}
\end{equation*}
as
\begin{equation*}
    X \rho X =
    \begin{pmatrix}
        r_{22} & r_{21} \\
        r_{12} & r_{11}
    \end{pmatrix},
    Y \rho Y =
    \begin{pmatrix}
        r_{22} & -r_{21} \\
        -r_{12} & r_{11}
    \end{pmatrix},
    Z \rho Z =
    \begin{pmatrix}
        r_{11} & -r_{12} \\
        -r_{21} & r_{22}
    \end{pmatrix}. 
\end{equation*}
Thus, the density matrix with errors is
\begin{equation*}
    \begin{split}
        \tilde{\rho} = & p_I \rho + p_X X \rho X + p_Y Y \rho Y
        + p_Z Z \rho Z + p_{R0} \ketbra{0}{0} + p_{R1} \ketbra{1}{1}
        + \frac{p_DI}{2}\\
        = &
        \begin{pmatrix}
            (p_I + p_Z) r_{11} + (p_X + p_Y) r_{22} + p_{R0} + \frac{p_D}{2}
            & (p_X - p_Y) r_{21} + (p_I - p_Z) r_{12} \\
            (p_X - p_Y) r_{12} + (p_I - p_Z) r_{21}
            & (p_I + p_Z) r_{22} + (p_X + p_Y) r_{11} + p_{R1} + \frac{p_D}{2}
        \end{pmatrix} \\
        = &
        \begin{pmatrix}
            (p_I + p_Z - p_X - p_Y) r_{11} + p_X + p_Y + p_{R0}
            + \frac{p_D}{2}
            & (p_I - p_Z) r_{12} \\
            (p_I - p_Z) r_{21}
            & \frac{p_D}{2} - (p_I + p_Z - p_X - p_Y) r_{11}
            + p_I + p_Z + p_{R1}
        \end{pmatrix}. 
    \end{split}
\end{equation*}
The last equality makes use of $p_X = p_Y$ and $r_{11} + r_{22} =
\tr{\rho}  =1$. On the other hand, the Kraus error \eqref{eq:Kraus_sq}
works as
\begin{equation*}
    K_\text{sq} (\rho) =
    \begin{pmatrix}
        (a_1^2 + a_2^2 + b_1^2 + b_2^2 - 1) r_{11} + 1 - b_1^2 - b_2^2
        & (a_1 b_1 + a_2 b_2) r_{12} \\
        (a_1 b_1 + a_2 b_2) r_{21}
        & -(a_1^2 + a_2^2 + b_1^2 + b_2^2 - 1) r_{11} + b_1^2 + b_2^2
    \end{pmatrix}.
\end{equation*}
Therefore, as long as it holds
\begin{equation}
    \label{eq:compose-linear}
    \begin{cases}
        a_1^2 + a_2^2 = p_I + p_Z + p_{R0} + \frac{p_D}{2} \\
        b_1^2 + b_2^2 = p_I + p_Z + p_{R1} + \frac{p_D}{2} \\
        a_1 b_1 + a_2 b_2 = p_I - p_Z 
    \end{cases},
\end{equation}
the probabilistic error is equivalent to the Kraus error. By
construction, we have
\begin{equation*}
    \begin{split}
        0 \leq p_I + p_Z + p_{R0} + \frac{p_D}{2} \leq 1, \\
        0 \leq p_I + p_Z + p_{R1} + \frac{p_D}{2} \leq 1. \\
    \end{split}
\end{equation*}
In addition, we have,
\begin{equation*}
    \bracket{p_I + p_Z + p_{R0} + \frac{p_D}{2}}
    \bracket{p_I + p_Z + p_{R1} + \frac{p_D}{2}}
    \ge p_I^2 + p_Z^2 \ge (p_I-p_Z)^2. 
\end{equation*}
Finally, by Lemma~\ref{lem:system-of-eq-Kraus}, we know that
\eqref{eq:compose-linear} has solutions and probabilistic error can be in
the form of \eqref{eq:Kraus_sq_matrix_form}.

\end{proof}

\section{Linear growing error bound} \label{app:linear-error-bound}

\begin{proof}[Proof of Lemma~\ref{lem:linear-bound-for-Kraus-error}]

Suppose the corresponding unitary matrix of gate $G_k$ is $U_k$, then
error-free density matrices are shown in \eqref{eq:error_free}, while the
actual states can also be written explicitly as
\begin{equation*}
    \begin{split}
        \tilde{\rho}_0 & = \rho_0, \\
        \tilde{\rho}_1 & = K_1\bracket{U_1 \tilde{\rho}_0 U_1^\dagger}, \\
        & \cdots \\
        \tilde{\rho}_m & = K_m\bracket{U_m \tilde{\rho}_{m-1} U_m^\dagger}.
    \end{split}
\end{equation*}

By the assumption \eqref{eq:assume-for-Kraus-linear-bound}, we have
\begin{equation} \label{eq:Kraus_error_one_step}
    \begin{split}
        & \fnorm{\tilde{\rho}_k - \rho_k}
        = \fnorm{K_k \bracket{U_k \tilde{\rho}_{k-1} U_k^\dagger}
        - U_k \rho_{k-1} U_k^\dagger} \\
        = & \fnorm{K_k \bracket{U_k \tilde{\rho}_{k-1} U_k^\dagger}
        - U_k \tilde{\rho}_{k-1} U_k^\dagger
        + U_k \tilde{\rho}_{k-1} U_k^\dagger
        - U_k \rho_{k-1} U_k^\dagger} \\
        \le & \fnorm{K_k \bracket{U_k \tilde{\rho}_{k-1} U_k^\dagger}
        - U_k \tilde{\rho}_{k-1} U_k^\dagger}
        + \fnorm{U_k \tilde{\rho}_{k-1} U_k^\dagger
        - U_k \rho_{k-1} U_k^\dagger} \\
        \le & \gamma_1 + \fnorm{\tilde{\rho}_{k-1} - \rho_{k-1}}.
    \end{split}
\end{equation}

Recursively applying \eqref{eq:Kraus_error_one_step}, we obtain,
\begin{equation*}
    \fnorm{\tilde{\rho}_m - \rho_m}
    \le \gamma_1 + \fnorm{\tilde{\rho}_{m-1} - \rho_{m-1}}
    \le 2 \gamma_1 + \fnorm{\tilde{\rho}_{m-2} - \rho_{m-2}}
    \le \cdots
    \le m \gamma_1.
\end{equation*}

\end{proof}

\begin{proof}

For general probabilistic model \eqref{eq:prob-to-Kraus}, it holds
\begin{equation*}
    \begin{split}
        \fnorm{P(\rho)-\rho}
        & = \fnorm{p \rho + (1-p) R \rho R^\dagger - \rho} \\
        & \le (1-p)\bracket{\fnorm{\rho} + \fnorm{R \rho R^\dagger}} \\
        &\le 2(1-p). 
    \end{split}
\end{equation*}
Choosing $\gamma_2 = 2 - 2\min\{p_1,\cdots,p_m\}$ and applying similar
derivations as in the proof of
Lemma~\ref{lem:linear-bound-for-Kraus-error}, we have
\eqref{eq:linear-bound-for-prob}.

\end{proof}

\section{Supporting Inequalities} \label{app:ineq}

We prove a few inequalities in this section, which are widely used
throughout this paper.

Given two density matrices, $\tilde{\rho}$ and $\rho$, we have,
\begin{equation*}
    \fnorm{\tilde{\rho} - \rho}^2
    = \fnorm{\tilde{\rho}}^2 + \fnorm{\rho}^2 - 2 \tr{\tilde{\rho}\rho}
    \leq 1 + 1 - 2 \tr{\tilde{\rho}\rho}
    \leq 2,
\end{equation*}
where the first inequality is due to the property of density matrix, and
the second inequality is due to Ruhe’s trace inequality.

Given two density matrices, $\tilde{\rho}$ and $\rho$, and a Kraus
operator $K$, we have,
\begin{equation} \label{eq:ineq2}
    \fnorm{K(\tilde{\rho}) - \rho}^2
    = \fnorm{K(\tilde{\rho})}^2 + \fnorm{\rho}^2
    - 2 \tr{K(\tilde{\rho})\rho}
    \leq 1 + 1 - 2 \tr{K(\tilde{\rho})\rho}
    \leq 2,
\end{equation}
where the first inequality is partially due to the property of density
matrix, and the second inequality is due to Ruhe’s trace inequality. Since
the Kraus operator preserves the trace and the semi-positivity of the
matrix, we could show that $\fnorm{K(\tilde{\rho})}^2 \leq 1$, which is
used in the first inequality in \eqref{eq:ineq2}.

\section{Kraus error model lemmas} \label{app:Kraus-lemma}

For both Lemma~\ref{lem:Kraus-lemma-sq} and
Lemma~\ref{lem:Kraus-lemma-dq}, we would focus on the proof for density
matrices of pure state and then adopt the inequality
\eqref{eq:Kraus-lemma-decomp} to achieve the final inequalities. Given a
density matrix $\rho$, we could always rewrite it as an eigenvalue
decomposition,
\begin{equation*}
    \rho = \sum_{i = 1}^{N} \lambda_i u_i u_i^\dagger,
\end{equation*}
where $u_i$s are orthonormal vectors and $\lambda_i$s are all
non-negative. Further, we know that the density matrix is of trace one,
i.e., $\tr{\rho} = 1$, which is equivalent to $\sum_{i} \lambda_i = 1$. By
the linearity of Kraus operator and the triangle inequality of Frobenius
norm, we have,
\begin{equation} \label{eq:Kraus-lemma-decomp}
    \fnorm{K(\rho)} = \fnorm{\sum_{i=1}^N \lambda_i K(u_i u_i^\dagger) }
    \le \sum_{i=1}^N \lambda_i \fnorm{K(u_i u_i^\dagger)}
    \le \max_i \fnorm{K(u_i u_i^\dagger)}.
\end{equation}
Hence, it is sufficient to show that Lemma~\ref{lem:Kraus-lemma-sq} and
Lemma~\ref{lem:Kraus-lemma-dq} holds for the density matrices of pure
state. 

Firstly, we prove a lemma for $n$-qubit pure state density matrixs. 

\begin{lemma}
    \label{lemma:app}
    We consider an $n$-qubit density matrix of a pure state 
    \begin{equation} \label{eq:n-qubit-density-matrix}
        \rho =
        \begin{pmatrix}
            r_{11} & r_{12} & \cdots & r_{1N} \\
            r_{21} & r_{22} & \cdots & r_{2N} \\
            \vdots & \vdots & \ddots & \vdots \\
            r_{N1} & r_{N2} & \cdots & r_{NN}
        \end{pmatrix}
        =
        \begin{pmatrix}
            R_{11} & R_{12} \\
            R_{21} & R_{22}
        \end{pmatrix}
        =
        \begin{pmatrix}
            R_1 R_1^\dagger & R_1 R_2^\dagger \\
            R_2 R_1^\dagger & R_2 R_2^\dagger
        \end{pmatrix},
    \end{equation}
    where $R_{11}, R_{12}, R_{21}, R_{22} \in \bbC^{\frac{N}{2} \times
    \frac{N}{2}}$, and $R_1, R_2 \in \bbC^{\frac{N}{2}}$ form the state of
    the pure state. The matrix
    \begin{equation}
    \rho^\prime = 
        \begin{pmatrix}
        \fnorm{R_{11}} & \fnorm{R_{12}} \\
        \fnorm{R_{21}} & \fnorm{R_{22}}
    \end{pmatrix}
    \end{equation}
    is a single-qubit density matrix of a pure state. 
\end{lemma}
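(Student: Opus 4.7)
The plan is to exploit the very specific structure forced by the pure-state assumption: every block $R_{ij}$ is an outer product of the two halves of the state vector, which makes its Frobenius norm collapse to a simple product of Euclidean norms, and the resulting $2\times 2$ matrix $\rho'$ turns out to be a rank-one outer product of a unit vector.

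First I would unpack the block form given in \eqref{eq:n-qubit-density-matrix}: since $\rho$ is a pure state, there is a unit vector $\ket{\psi} = \bigl(\begin{smallmatrix} R_1 \\ R_2 \end{smallmatrix}\bigr) \in \bbC^{N}$ with $\ket{\psi}\bra{\psi} = \rho$, so $R_{ij} = R_i R_j^\dagger$ for $i,j \in \{1,2\}$. Next I would use the elementary identity $\fnorm{u v^\dagger}^2 = \tr{v u^\dagger u v^\dagger} = \norm{u}^2 \norm{v}^2$, valid for any column vectors $u,v$, to compute
\begin{equation*}
    \fnorm{R_{ij}} = \norm{R_i} \, \norm{R_j}, \qquad i,j \in \{1,2\}.
\end{equation*}

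Setting $a = \norm{R_1}$ and $b = \norm{R_2}$, this immediately gives
\begin{equation*}
    \rho' =
    \begin{pmatrix}
        a^2 & ab \\ ab & b^2
    \end{pmatrix}
    =
    \begin{pmatrix} a \\ b \end{pmatrix}
    \begin{pmatrix} a & b \end{pmatrix},
\end{equation*}
so $\rho'$ is automatically Hermitian and positive semi-definite with rank at most one. The remaining point is the trace normalization: because $\ket{\psi}$ is a unit vector, $a^2 + b^2 = \norm{R_1}^2 + \norm{R_2}^2 = \norm{\ket{\psi}}^2 = 1$, which makes $\bigl(\begin{smallmatrix} a \\ b \end{smallmatrix}\bigr)$ a unit vector in $\bbC^2$. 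Hence $\rho' = v v^\dagger$ for $v = (a,b)^\top$, and therefore $\rho'$ is a single-qubit pure-state density matrix.

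There is essentially no obstacle here; the only mild subtlety is remembering the rank-one identity $\fnorm{u v^\dagger} = \norm{u} \norm{v}$, without which the entries of $\rho'$ look nonlinear in $R_1, R_2$. Everything else is a direct substitution and a use of $\norm{\ket{\psi}} = 1$.
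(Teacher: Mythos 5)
Your proof is correct, and it takes a more direct route than the paper's. You compute all four entries of $\rho'$ explicitly via the rank-one identity $\fnorm{u v^\dagger} = \norm{u}\,\norm{v}$, obtaining $\rho' = v v^\dagger$ for the real unit vector $v = (\norm{R_1}, \norm{R_2})^\top$, which exhibits the pure state concretely. The paper instead avoids computing the off-diagonal entries: it notes that $\fnorm{R_{ii}} = \tr{R_{ii}}$ for the rank-one positive semi-definite diagonal blocks to get $\tr{\rho'} = \tr{\rho} = 1$, observes that $\fnorm{\rho'}^2 = \sum_{i,j}\fnorm{R_{ij}}^2 = \fnorm{\rho}^2 = 1$, and then solves $\lambda_1 + \lambda_2 = 1$, $\lambda_1^2 + \lambda_2^2 = 1$ for the eigenvalues of the symmetric matrix $\rho'$ to conclude $\lambda_1 = 1$, $\lambda_2 = 0$. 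Both arguments hinge on the same structural fact that every block is an outer product of $R_1$ and $R_2$; yours buys an explicit identification of the resulting single-qubit state, while the paper's trace-plus-Frobenius-norm bookkeeping sidesteps the entrywise computation at the cost of an extra eigenvalue step. No gaps in either.
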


\begin{proof}

Both $R_{11} = R_1 R_1^\dagger$ and $R_{22} = R_2 R_2^\dagger$ are
symmetric positive semi-definite matrices and of rank no greater than $1$.
Therefore, it holds
\begin{equation*}
    \tr{\rho^\prime} = \fnorm{R_{11}} + \fnorm{R_{22}}
    = \tr{R_{11}} + \tr{R_{22}} = \tr{\rho} = 1. 
\end{equation*}
We also have
\begin{equation*}
    \fnorm{\rho^\prime}^2 = \fnorm{R_{11}}^2 + \fnorm{R_{12}}^2
    + \fnorm{R_{21}}^2 + \fnorm{R_{22}}^2 = \fnorm{\rho}^2 = 1. 
\end{equation*}
By the construction of $R_{12}$ and $R_{21}$, we know that $\rho^\prime$
is a symmetric matrix. Let the two real eigenvalues of $\rho^\prime$ be
$\lambda_1$ and $\lambda_2$ ($\lambda_1 \geq \lambda_2$). The above
equations are equivalent to,
\begin{equation*}
    \lambda_1 + \lambda_2 = 1, \quad
    \lambda_1^2 + \lambda_2^2 = 1,
\end{equation*}
whose solution is $\lambda_1 = 1$ and $\lambda_2 = 0$. Hence,
$\rho^\prime$ is a single-qubit density matrix of a pure state. 

\end{proof}

Now we prove Lemma~\ref{lem:Kraus-lemma-sq}. 

\begin{proof}[Proof of Lemma~\ref{lem:Kraus-lemma-sq}]

We first prove that Lemma~\ref{lem:Kraus-lemma-sq} holds for single-qubit
density matrix. Denote $\rho = uu^\dagger$, where $u=[u_1,u_2]^\top
\in\mathbb{C}^2$ and $\norm{u}^2=1$. Substituting the expression of
$\Ksq{u u^\dagger}$ as in \eqref{eq:Kraus_sq},
\begin{equation*}
    \begin{split}
        \Ksq{u u^\dagger} & = V_1 u u^\dagger V_1^\dagger
        + V_2 u u^\dagger V_2^\dagger + V_3 u u^\dagger V_3^\dagger
        + V_4 u u^\dagger V_4^\dagger \\
        & = x_1 x_1^\dagger + x_2 x_2^\dagger
        + x_3 x_3^\dagger + x_4 x_4^\dagger,
    \end{split}
\end{equation*}
where $x_k = V_k u$ for $k = 1, 2, 3, 4$, i.e.
\begin{equation*}
    x_1 =
    \begin{pmatrix}
        a_1 u_1 \\
        b_1 u_2
    \end{pmatrix},
    \quad
    x_2 =
    \begin{pmatrix}
        a_2 u_1 \\
        b_2 u_2
    \end{pmatrix},
    \quad
    x_3 =
    \begin{pmatrix}
        0 \\
        a_3 u_1
    \end{pmatrix},
    \quad
    x_4 =
    \begin{pmatrix}
        b_3 u_2 \\
        0
    \end{pmatrix},
\end{equation*}
satisfying
\begin{equation*}
    \norm{x_1}^2 + \norm{x_2}^2 + \norm{x_3}^2 + \norm{x_4}^2
    = u^\dagger \bracket{V_1^\dagger V_1 + V_2^\dagger V_2
    + V_3^\dagger V_3 + V_4^\dagger V_4} u = 1.
\end{equation*}
By the triangle inequality, we have
\begin{equation} \label{eq:delta_ineq_sq}
    \fnorm{\Ksq{u u^\dagger}} \le \fnorm{x_1x_1^\dagger}
    + \fnorm{x_2x_2^\dagger} + \fnorm{x_3x_3^\dagger}
    + \fnorm{x_4x_4^\dagger} = 1.
\end{equation}

Next, we would like to show that the equality in the inequality
\eqref{eq:delta_ineq_sq} cannot be achieved, i.e. $\fnorm{\Ksq{u
u^\dagger}}$ is strictly less than $1$ for any $u$. Let us consider the
above triangle inequality for a pair $x_i$ and $x_j$. The triangle
inequality could be simplified as the Cauchy-Schwarz inequality,
\begin{equation} \label{eq:delta_ineq_sq_2}
    \fnorm{x_i x_i^\dagger + x_j x_j^\dagger} \leq \fnorm{x_i x_i^\dagger} 
    + \fnorm{x_j x_j^\dagger}
    \Leftrightarrow
    \abs{x_i^\dagger x_j} \leq \abs{x_i}^2 \abs{x_j}^2
\end{equation}
where the equality is achieved if and only if $x_i$ and $x_j$ are linearly
dependent. Further, the equality in \eqref{eq:delta_ineq_sq} holds if and
only if the equality in \eqref{eq:delta_ineq_sq_2} holds for any pair
$x_i$ and $x_j$.

Recall that $\det{\begin{pmatrix} a_1 & a_2 \\ b_1 & b_2 \end{pmatrix}}
\neq 0$ as in \eqref{eq:Kraus_sq}. When $u_1 = 0$, by the normality of
$u$, we know that $\abs{u_2} = 1$. $x_1$, $x_2$ and $x_4$ cannot be
linearly dependent unless $b_1 = b_2 = b_3 = 0$, which violates the
assumption. When $u_2 = 0$, similar analysis leads to $a_1 = a_2 = a_3 =
0$, which also violates the assumption. When $u_1 \ne 0$ and $u_2 \neq 0$,
by the linear dependency of $x_3$ and $x_4$, we know that $a_3 = b_3 = 0$.
In this case, the linear dependency of $x_1$ and $x_2$
contradicts the nonzero determinant assumption. Therefore, $\fnorm{\Ksq{u
u^\dagger}}$ is strictly less than $1$.

Since $\fnorm{\Ksq{u u^\dagger}}$ is a continuous function of $u_1$ and
$u_2$, which are defined on a compact domain $\abs{u_1}^2 + \abs{u_2}^2 =
1$, the supremum of $\fnorm{\Ksq{u u^\dagger}}$ is achievable and is
strictly less than $1$. Thus, there is a $\delta^\prime > 0$, such that
$\fnorm{\Ksq{u u^\dagger}}^2 \le 1 - \delta^\prime$ for pure state
single-qubit density matrix $\rho = u u^\dagger$. Via the inequality
\eqref{eq:Kraus-lemma-decomp}, we know that there exists a $\delta > 0$,
such that $\Ksq{\rho} \le 1 - \delta$ for all single-qubit density matrix
$\rho$.

Now we consider an $n$-qubit pure state density matrix in the form of
\eqref{eq:n-qubit-density-matrix}. Without loss of generality, we assume
that the Kraus operator is acted on the first qubit, then
\begin{equation*}
    \begin{split}
        \Ksq{\rho} & = \sum_{j=1}^4 \bracket{V_j \otimes I}
        \rho \bracket{V_j^\dagger \otimes I} \\
        & =
        \begin{pmatrix}
            b_3^2 R_{22} + \bracket{a_1^2 + a_2^2} R_{11}
            & \bracket{a_1 b_1 + a_2 b_2} R_{12} \\
            \bracket{a_1 b_1 + a_2 b_2} R_{21} & a_3^2 R_{11}
            + \bracket{b_1^2 + b_2^2} R_{22}
        \end{pmatrix}.
    \end{split}
\end{equation*}
Using triangle inequality, we have
\begin{equation*}
    \begin{split}
        \fnorm{\Ksq{\rho}}^2 = & \fnorm{b_3^2 R_{22} 
        + \bracket{a_1^2 + a_2^2} R_{11}}^2
        + \fnorm{\bracket{a_1 b_1 + a_2 b_2} R_{12}}^2 \\
        & + \fnorm{a_3^2 R_{11} + \bracket{b_1^2 + b_2^2} R_{22}}^2
        + \fnorm{\bracket{a_1 b_1 + a_2 b_2} R_{21}}^2 \\
        \le & \bracket{b_3^2 \fnorm{R_{22}} + \bracket{a_1^2 + a_2^2}
        \fnorm{R_{11}}}^2 + \bracket{a_1 b_1 + a_2 b_2}^2 \fnorm{R_{12}}^2 \\
        & + \bracket{a_3^2 \fnorm{R_{11}} + \bracket{b_1^2 + b_2^2}
        \fnorm{R_{22}}}^2 + \bracket{a_1 b_1 + a_2 b_2}^2 \fnorm{R_{21}}^2 \\
        = & \fnorm{\Ksq{\rho^\prime}}^2, 
    \end{split}
\end{equation*}
where
\begin{equation*}
    \rho^\prime =
    \begin{pmatrix}
        \fnorm{R_{11}} & \fnorm{R_{12}} \\
        \fnorm{R_{21}} & \fnorm{R_{22}}
    \end{pmatrix},
\end{equation*}
and we abuse notation $\Ksq{\rho^\prime}$ denoting $K_\text{sq}$ acting on
$\rho^\prime$. According to Lemma~\ref{lemma:app}, $\rho^\prime$ is a
single-qubit pure state density matrix. Therefore, it holds
$\fnorm{K_\text{sq}\bracket{\rho}}^2 \le
\fnorm{K_\text{sq}\bracket{\rho^\prime}}^2 \le 1-\delta^\prime$ for some
constant $\delta^\prime$. Again, via the inequality
\eqref{eq:Kraus-lemma-decomp}, we know that there exists a $\delta > 0$,
such that $\Ksq{\rho} \le 1 - \delta$ for all density matrix $\rho$.

\end{proof}

Lemma~\ref{lem:Kraus-lemma-dq} can be proved in the same way as that of
Lemma~\ref{lem:Kraus-composition-sq}. For the sake of space, we omit the
detailed proof.

\end{document}